\documentclass[11pt,reqno]{amsart}

\usepackage{etex}
\usepackage[utf8]{inputenc}
\usepackage{amsfonts}
\usepackage{amsmath}
\usepackage{amssymb}
\usepackage{amsthm}
\usepackage{mathrsfs}
\usepackage{stmaryrd}
\usepackage{color}
\usepackage[english]{babel}
\usepackage[T1]{fontenc}
\usepackage{url}
\usepackage{graphicx}
\usepackage[pdftex,colorlinks,backref=page,citecolor=blue]{hyperref}
\usepackage{caption}
\usepackage{enumerate}
\usepackage{epstopdf}
\usepackage{hyphenat}
\usepackage{float}
\usepackage{csquotes}
\usepackage{mathtools}
\usepackage{indentfirst}
\usepackage[export]{adjustbox}
\usepackage{tikz}
\usepackage{color}
\usepackage[all]{xy}
\usetikzlibrary{matrix,arrows,decorations.pathmorphing}
\usepackage{booktabs}
\usepackage{array}
\usepackage{bm}
\usepackage{tcolorbox}
\usepackage{booktabs}
\usepackage{longtable}
\usepackage{tikz-cd}

\newcolumntype{P}[1]{>{\centering\arraybackslash}p{#1}}

\allowdisplaybreaks

\newtheorem{theorem}{Theorem}[section]

\newtheorem{lemma}[theorem]{Lemma}
\newtheorem{proposition}[theorem]{Proposition}
\newtheorem{corollary}[theorem]{Corollary}

\newcommand{\Lex}{\operatorname{Lex}}

\newcommand{\DegRevLex}{\operatorname{DegRevLex}}

\newcommand{\LT}{\operatorname{LT}}
\newcommand{\T}{\operatorname{T}}

\newcommand{\Supp}{\operatorname{Supp}}
\newcommand{\start}{\operatorname{start}}
\newcommand{\finish}{\operatorname{finish}}

\newcommand{\Min}{\operatorname{Min}}

\newcommand{\SolvingDegree}{\operatorname{sd}}

\newcommand{\LexGB}{\operatorname{LexGB}}
\newcommand{\sat}{\operatorname{sat}}

\theoremstyle{definition}

\newtheorem{problem}[theorem]{Problem}
\newtheorem{remark}[theorem]{Remark}

\newtheorem{example}[theorem]{Example}
\newtheorem{definition}[theorem]{Definition}

\newtheorem*{algorithmm*}{Polynomial System}

\makeatletter
\def\l@subsection{\@tocline{2}{0pt}{2.5pc}{5pc}{}}
\makeatother

\author{Alessio Caminata}
\address{Dipartimento di Matematica, Universit\`a di Genova\\ via Dodecaneso 35, 16146, Genova, Italy}
\email{alessio.caminata@unige.it}

\author{Andrea Sanguineti}
\address{Dipartimento di Matematica, Universit\`a di Genova\\ via Dodecaneso 35, 16146, Genova, Italy}
\email{andrea.o.sanguineti@gmail.com}

\author{Silvia Sconza}
\address{Institute of Mathematics, University of Zurich, Zurich, Switzerland }
\email{silvia.sconza@math.uzh.ch}

\title[Algebraic Modelings of the Supersingular Isogeny Problem]{Algebraic Modelings of the Supersingular Isogeny Problem}

\keywords{Supersingular Isogeny Problem, Gr\"obner bases, solving degree, algebraic modeling.}

\begin{document}

\begin{abstract}
We present a new algebraic modeling of the Supersingular Isogeny Problem as a system of multivariate polynomial equations, in the case where the elliptic curves are connected by an isogeny whose degree is a power of $2$ or $3$. This modeling relies on Renes  formulas for elliptic curves in Montgomery form (degree $2$) or triangular form (degree $3$). We investigate several algebraic properties of these systems: we prove that they are zero-dimensional, compute the dimension of their highest degree part, and show that they are not in generic coordinates. Experimental results show that solving these systems via Gr\"obner basis techniques is significantly faster than solving the algebraic modeling with modular polynomials.
\end{abstract}

\maketitle

\section*{Introduction}
We fix a prime number $p>3$ and let $E$ and $E'$ be two supersingular elliptic curves defined over $\mathbb{F}_{p^2}$ such that there exists a degree $d$ isogeny $\varphi:E\rightarrow E'$. The \emph{Supersingular Isogeny Problem (SIP)} asks to find $\varphi$. When $d$ and $p$ are sufficiently large, a random instance of this problem is believed to be computationally difficult to solve, even using a quantum computer. For this reason, it has been used as the underlying problem for several post-quantum cryptographic schemes, such as the signature scheme SQIsign \cite{SQIsign}, which, at the time of writing, is admitted to Round 3 of the NIST call for post-quantum signature schemes. For this reason, the SIP has been widely studied and several algorithms and methods have been proposed in the literature.  We note that, since the degree of isogenies is multiplicative under composition, the main difficulty in solving the SIP arises when the degree $d$ is a power of a prime $\ell\neq p$. In this setting, the most efficient approaches exploit structural properties of the graph of $\ell$-isogenies between supersingular elliptic curves \cite{corte2022accelerating, CS20, DG16, GPS20}.

Given the growing importance of post-quantum cryptography, both the cryptanalysis of proposed schemes and the study of the hardness of their underlying problems have received increasing attention in recent years. A comprehensive security assessment requires considering attacks originating from different areas of post-quantum cryptography. In this context, algebraic modeling and algebraic attacks have emerged as important tools. In the literature, several algebraic models have been proposed and studied for problems arising in code-based cryptography \cite{CMMPR25, DGRR26, MPS23, Saeed} and lattice-based cryptography \cite{ACFRP15, AroraGe, Steiner24}. In contrast, algebraic modeling of problems in isogeny-based cryptography has received comparatively less attention \cite{Taketal}.

A natural approach to modeling the Supersingular Isogeny Problem as a system of multivariate polynomial equations is via modular polynomials \cite[\S10]{Was08}. Given a positive integer $N\neq p$ the $N$-th modular polynomial $\Phi_{N}(X,Y)$ is a polynomial (with integer coefficients) with the property that  $\Phi_N(j_1,j_2) = 0$ if and only if $j_1,j_2$ are the $j$-invariants of elliptic curves that are related by an isogeny of degree $N$. So, if $N=\ell^e$, where $\ell$ is a prime number, given the $j$-invariants $j_{\start},j_{\finish}$ of two elliptic curves $E_{\start},E_{\finish}$ connected by a $N$-isogeny $\varphi$, we can reduce the SIP to several smaller instances of the SIP in degree $\ell$ by solving a multivariate polynomial system (see Polynomial system~\hyperlink{system1}{1}) with modular polynomials, whose solutions represent all possible paths from $E_{\start}$ to $E_{\finish}$ in the $\ell$-isogeny graph. This approach has been studied in the paper \cite{Taketal}. There, the authors propose the algebraic model and study the complexity of solving the corresponding polynomial system via Gr\"obner basis techniques.

In this paper, we propose two distinct algebraic models for the SIP: one for isogenies whose degree is a power of $2$, and one for isogenies whose degree is a power of $3$. These two cases are the most relevant for cryptographic applications. The main ideas, which originate from the works of Renes \cite{Renes}, Burdges--De Feo \cite{Burdges-DeFeo}, and Costello--Hisil \cite{Costello-Hisil}, are as follows.  
If an elliptic curve E is given in Montgomery form, then the $2$-isogenies with domain E can be described explicitly in terms of the $x$-coordinates of the $2$-torsion points of E. More precisely, if $\alpha\in\mathbb{F}_{p^2}$ is the $x$-coordinate of a $2$-torsion point of $E$, then one can construct a $2$-isogeny $\varphi_{\alpha}:E\rightarrow E({\alpha})$, where $E({\alpha})$ can also be expressed in Montgomery form, allowing us to iteratively describe also the $2$-isogenies with domain $E(\alpha)$. Moreover, the $2$-isogeny with kernel $\langle(0,0)\rangle$ is the dual isogeny of $\varphi_{\alpha}$. By proceeding in this way, we can explicitly construct a non-backtracking path (a sequence in which no two consecutive isogenies are dual to each other) of $2$-isogenies starting from $E$.
We use these formulas to derive multivariate polynomials, which we call \emph{Renes polynomials} (Definition~\ref{def:Renespolynomials2}), playing a role analogous to that of modular polynomials for curves in Montgomery form. Using these polynomials, we construct a system of polynomial equations modeling the SIP for curves in Montgomery form and isogenies of degree a power of $2$ (Polynomial System~\hyperlink{system2}{2}).
One key difference with respect to the modular polynomials setting is that, once a $j$-invariant is fixed, there exist several non-isomorphic elliptic curves in Montgomery with that $j$-invariant. As a consequence, the resulting polynomial system contains a polynomial of degree $12$, which significantly impacts the efficiency of solving the system. However, this issue can be mitigated by decomposing the problem into six quadratic systems (Polynomial System~\hyperlink{system3}{3}), whose computations can be parallelized.

The case of degree $3$ is analogous, with the key difference that, in order to explicitly describe the $3$-isogenies with domain an elliptic curve $E$, the curve must be expressed in triangular form. As in the degree $2$ setting, the central idea is to ensure that the point $(0,0)$ is a $3$-torsion point. In this framework, the $3$-isogenies with kernel $\langle(0,0)\rangle$ correspond to the dual isogenies along the path, and we can exclude them to obtain a non-backtracking path.
As before, there exist multiple non-isomorphic curves in triangular form sharing the same $j$-invariant. Consequently, we construct $12$ quadratic systems (Polynomial System~\hyperlink{system7}{5}), whose computations can be parallelized.

Once the algebraic models have been constructed, we analyze the complexity of solving them using Gr\"obner basis techniques and compare it with that of systems based on modular polynomials. A key parameter for estimating the complexity of solving a multivariate polynomial system $\mathcal{F}$ via linear-algebra-based Gr\"obner basis algorithms is the solving degree, denoted $\SolvingDegree(\mathcal{F})$. Informally, the solving degree is the highest degree reached by the polynomials appearing during the computation of a Gr\"obner basis using such algorithms.
Estimating this degree without explicitly solving the system, and thus understanding how it grows with respect to the input parameters (such as the number of variables, the number of equations, and their degrees), is notoriously difficult. Several invariants have been introduced to address this problem, including the degree of regularity \cite{BFS04, Salizzoni} and the Castelnuovo-Mumford regularity \cite{Caminata-Gorla2}.
The degree of regularity is defined when the highest-degree homogeneous component $\mathcal{F}^{\mathrm{top}}$ of the system is zero-dimensional. However, for the systems arising from the algebraic modeling of the SIP, both with modular and Renes polynomials, we show that this condition does not hold (Proposition~\ref{prop:Krull-dim}). On the other hand, when a system is in generic coordinates, the solving degree is bounded above by the Castelnuovo-Mumford regularity \cite{Caminata-Gorla2}, a fundamental invariant in commutative algebra and algebraic geometry. Nevertheless, we prove that the systems under consideration are not in generic coordinates (Proposition~\ref{prop:genericcoordinates}).

Finally, we perform computational experiments to compare both the running times and the solving degrees of the polynomial systems arising from modular and Renes polynomials (see Section~\ref{section:experiments}). The results show that the approach based on Renes polynomials is significantly faster and typically yields a strictly smaller solving degree. For instance, for a $24$-bit prime $p$, we are able to solve the SIP for isogenies of degree up to $2^{15}$ with a solving degree of $6$. In contrast, for the same parameters using modular polynomials, computations become infeasible beyond degree $2^{12}$, with a solving degree of~$9$.

\subsection*{Structure of the paper}
In Section~\ref{section:preliminaries}, we recall some definitions, notations, and results on elliptic curves and polynomial system solving via Gr\"obner bases. In Section~\ref{section:modularpolynomials}, we recall the algebraic modeling with modular polynomials from \cite{Taketal}. In Section~\ref{section:Renes2} and Section~\ref{section:Renes3} we present our algebraic modelings with Renes polynomials in degrees $2$ and $3$ respectively. In Section~\ref{section:algebraicanalysis}, we prove some algebraic properties of these systems and in Section~\ref{section:experiments} we perform some experiments to compare them.

\subsection*{Acknowledgments}
We thank Luca De Feo for bringing Renes’ work \cite{Renes} to our attention. We also thank Giulio Codogni, Ignacio M. Jiménez, Guido Lido, and Marzio Mula for several helpful comments on an earlier version of this paper.

\par A. Caminata and A. Sanguineti are supported by the PRIN PNRR 2022 grant P2022J4HRR \enquote{Mathematical Primitives for Post Quantum Digital Signatures} and by the MUR Excellence Department Project awarded to Dipartimento di Matematica, Università di Genova, CUP D33C23001110001.
A. Caminata is supported by the PRIN 2022 grant 2022K48YYP \enquote{Unirationality, Hilbert schemes, and singularities}.

S. Sconza’s research is supported by armasuisse Science and Technology.

\section{Preliminaries}\label{section:preliminaries}

Throughout the paper $p$ denotes a prime number greater than $3$.

\subsection{Isogeny Problem}
An \emph{elliptic curve} $E$ over a finite field $\mathbb{F}_q$ of characteristic $p$ is a projective plane curve which is defined (in affine coordinates) by a \emph{Weierstrass equation} $y^2=x^3+ax+b$ with $a,b\in\mathbb{F}_q$ and \emph{discriminant} $\Delta(E)=4a^3+27b^2\neq0$. Its \emph{$j$-invariant} is defined as $j(E)=1728\frac{4a^3}{4a^3+27b^2}$.
Two elliptic curves have the same $j$-invariant if and only if they are isomorphic (as algebraic curves) over the algebraic closure $\overline{\mathbb{F}}_q$ of $\mathbb{F}_q$.
The set of points of $E$ is endowed with an addition law that makes it into an abelian group whose identity element is $O_E=[0:1:0]$, the unique point at infinity of the curve.
The set $E(\mathbb{F}_q)$ of $\mathbb{F}_q$-rational points of $E$ forms a subgroup.

An \emph{isogeny} $\varphi$ between two elliptic curves $E$ and $E'$ is a morphism $\varphi\colon E\rightarrow E'$, which is also a group homomorphism of the underlying group structures. For example, the \emph{multiplication-by-$m$} map $[m]\colon E \rightarrow E$ is an isogeny. Its kernel, denoted by $E[m]$, is the set of $m$-torsion points of $E$. The degree of an isogeny is its degree as morphism.  We will write $\ell$-isogeny to indicate an isogeny of degree $\ell$. The degree is multiplicative under composition of isogenies: if $\varphi$ and $\psi$ are isogenies such that $\varphi\circ\psi$ is defined, then $\deg(\varphi\circ\psi)=\deg(\varphi)\cdot\deg(\psi)$.
We say that an isogeny is \emph{cyclic} if its kernel is a cyclic group. 
We recall that an isogeny is uniquely determined by its kernel, up to isomorphism. For example, when the degree of the isogeny is not divisible by the characteristic of the field, we have  $\deg(\varphi)=|\ker(\varphi)|$. In particular, if an isogeny has prime degree, then it is necessarily cyclic.
Finally, we recall that given an isogeny $\varphi\colon E\rightarrow E'$ of degree $\ell$, there exists an isogeny $\widehat{\varphi}\colon E'\rightarrow E$ of degree $\ell$ such that $\varphi\circ\widehat{\varphi}=[\ell]$ and $\widehat{\varphi}\circ\varphi=[\ell]$. Such an isogeny $\widehat{\varphi}$ is called the \emph{dual isogeny} of $\varphi$.

Elliptic curves can be partitioned into two disjoint families: ordinary and supersingular elliptic curves. These can be defined in many equivalent ways. For example, we say that an elliptic curve $E$ is \emph{supersingular} if the subgroup of $p$-torsion points (where $p$ is the characteristic of the base field $\mathbb{F}_q$) is trivial, i.e., $E[p]=\{O_E\}$. Otherwise, $E$ is said to be \emph{ordinary} and in this case $E[p]\cong\mathbb{Z}/p\mathbb{Z}$. A distinctive property of supersingular elliptic curves is that their $j$-invariants are always defined over $\mathbb{F}_{p^2}$. Whether an elliptic curve is ordinary or supersingular is an isogeny invariant. Namely, if $\varphi\colon E\rightarrow E'$ is an isogeny, then $E$ is supersingular (resp. ordinary) if and only if $E'$ is supersingular (resp. ordinary). 

We are interested in the following problem.

\begin{problem}[SIP: Supersingular Isogeny Problem]
	Given $E$ and $E'$ two supersingular elliptic curves defined over $\mathbb{F}_q$, find (if it exists) an isogeny $\varphi\colon E\rightarrow E'$ of degree $\ell$.
\end{problem}

A key observation in the study of this problem is that any isogeny of degree $\ell$ can be decomposed into a sequence of isogenies of prime degree, by factoring $\ell$ into its prime components.  Thus, it is often enough to focus on the Supersingular Isogeny Problem when $\ell$ is a prime power. In practice, chains of isogenies of prime degree $2$ or $3$ are often considered. Notice that in this setup all the elliptic curves in the intermediate steps of the chain are supersingular.

\subsection{Polynomial system solving via Gr\"obner bases}
In this section, we give a brief overview on how to solve polynomial systems by using Gr\"obner bases and the related relevant invariants. 

For consistency with rest of the paper, we fix a finite field $\mathbb{F}_q$ (although everything in this section can be defined over any field), and we work over the polynomial ring  $\mathbb{F}_q[x_1,\dots,x_n]$.
Let $T$ be the set of terms (i.e., monic monomials) of $\mathbb{F}_q[x_1,\dots,x_n]$. A \emph{term order} $<$ on $T$ is a total order which is compatible with the multiplicative structure, that is $m_1 < m_2$ implies $m_1m < m_2m$, and $m_1 \mid m_2$ implies $m_1 \leq m_2$. Given a polynomial $f$ we denote by $\LT_<(f)$ its leading term, i.e., the largest term of $f$ with respect to the chosen term order. Similarly, given an ideal $I$, we denote by $\LT_<(I)=(\LT_<(f): \ f\in I)$, the \emph{initial ideal} of $I$, which is generated by all leading terms of polynomials of $I$. If $I$ is generated by polynomials $f_1,\dots,f_m$, then it holds $(\LT_<(f_1),\dots,\LT_<(f_m))\subseteq\LT_<(I)$, but the inclusion might be strict in general. When equality holds, we say that $f_1,\dots,f_m$ is a \emph{Gr\"obner basis} of $I$ (with respect to the given term order).

Gr\"obner bases have a tight connection with polynomial system solving. Namely, the solutions of a zero-dimensional  polynomial system $f_i(x_1,\ldots,x_n) = 0$ ($i = 1,\ldots,m$) can be read off from a lexicographic Gr\"obner basis of the corresponding ideal $(f_1,\dots,f_m)$ in $\mathbb{F}_q[x_1,\dots,x_n]$ thanks to the Shape Lemma and its extensions (see e.g. \cite{Caminata-Gorla2}).

There are several methods to compute Gr\"obner bases in practice. The oldest one is Buchberger's algorithm \cite{Buchberger}. Some of the fastest methods in use today are the \textit{linear-algebra algorithms}, such as F4 \cite{FaugereF4}, F5 \cite{FaugereF5}, and XL \cite{XLAlgorithm}. The core idea of these algorithms is to compute the Gr\"obner basis by doing Gaussian reduction on certain matrices, called \textit{Macaulay matrices}.
Let $d \in \mathbb{Z}_{\geq 1}$, and let $\T_{\leq d}$ be the set of terms in $\mathbb{F}_q[x_1,\ldots,x_n]$ of degree less than or equal to $d$. The \textit{Macaulay matrix} $\mathcal{M}_{\leq d}$ of $\mathcal{F}=\{f_1,\ldots,f_m\}$ is a matrix whose rows are indexed by the polynomials $t_{k,h}f_k$, where $k = 1, \ldots, m$ and $t_{k,h}$ ranges through all terms in $\T_{\leq d}$ such that $\deg(t_{k,h}f_k) \leq d$, and columns are indexed by the terms in $\T_{\leq d}$. The $(i,j)$ entry of $\mathcal{M}_{\leq d}$ is the coefficient of the $j$-th term in the polynomial of the $i$-th row. When $d$ is large enough, performing Gaussian elimination on $\mathcal{M}_{\leq d}$ provides a Gr\"obner basis of $(\mathcal{F})$. The smallest $d$ such that this happens is called \emph{solving degree} of $\mathcal{F}$ and we denote it by $\SolvingDegree_{\leq}(\mathcal{F})$. We often omit the subscript $\leq$ when the term order is clear from the context.

In practice, the linear-algebra-based methods mentioned above make extensive use of this idea, carefully selecting suitable submatrices of the Macaulay matrices in order to minimize the number of zero entries and avoid unnecessary computations. Moreover, although a Gr\"obner basis with respect to the lexicographic order is ultimately required to extract the solutions of the system, it is often more efficient to first compute a Gr\"obner basis with respect to the degree-reverse lexicographic order, and then convert it to lexicographic form using the FGLM algorithm \cite{FGLM} or related techniques.
That being said, the main computational bottleneck in solving such systems is typically the Gaussian elimination performed on the largest Macaulay matrix encountered during the computation. The size of this matrix depends on the number of variables $n$, the number of polynomials $m$, and the solving degree $\SolvingDegree_\leq(\mathcal{F})$.
For this reason, it is crucial to determine, or at least accurately estimate, the solving degree of a polynomial system and understand how it scales with respect to the system’s parameters. However, computing the solving degree a priori is generally very difficult. To address this issue, several alternative invariants, often more tractable, have been introduced and related to the solving degree, including the degree of regularity \cite{BFS04}, the last fall degree \cite{Caminata-Gorla1, lastfalldeg}, the first fall degree \cite{firstfalldeg}, and the Castelnuovo-Mumford regularity \cite{Caminata-Gorla2}.

\section{Algebraic Modeling with Modular Polynomials}\label{section:modularpolynomials}

The first natural way to produce an algebraic modeling for the Supersingular Isogeny Problem is using modular polynomials. This was explored by Takahashi et al. \cite{Taketal}. We briefly recall this modeling, which will serve us as a comparison.

\subsection{Background on modular polynomials}
We fix a finite field $\mathbb{F}_q$ of characteristic $p$ and let $N>1$ be an integer coprime with $p$. Then, there exists a polynomial $\Phi_N \in \mathbb{Z}[X,Y]$, called \textit{$N$-th modular polynomial}, such that for all $j_1,j_2 \in \mathbb{F}_{q}$ it holds that $\Phi_N(j_1,j_2) = 0$ if and only if $j_1,j_2$ are the $j$-invariants of elliptic curves over $\mathbb{F}_{q}$ that are related by an isogeny of degree $N$ defined over $\overline{\mathbb{F}}_p$ (see e.g. \cite[\S11.9]{Husemoeller}).

In the following proposition we collect some useful properties of modular polynomials.

\begin{proposition}\label{properties_modular_poly}
Let $N>1$ be an integer. The modular polynomial $\Phi_{N}(X,Y) \in \mathbb{Z}[X,Y]$ satisfies the following properties:
\begin{enumerate}
    \item $\Phi_{N}(X,Y) = \Phi_{N}(Y,X);$
    \item $\Phi_{N}$ is monic and it has degree $N + 1$ both in $X$ and $Y$;
    \item If $N$ is prime, then the highest degree form $\Phi_{N}^{\mathrm{top}}$ of $\Phi_{N}$ is $- X^N Y^N$.
\end{enumerate}
\begin{proof}
The first two properties are well known (see e.g. \cite[Lectures~19~and~20]{Sutherland}). We address the third. Let $N$ be prime, by combining \cite[Theorem~19.14~and~Lemma~20.9]{Sutherland} we get that $\Phi_N(X,Y)$ is monic in $X$ and in $Y$ and the top degree part of $\Phi_N(X,X)$ is $-X^{2N}$. Therefore, we can write
$$\Phi_N(X,Y) = X^{N+1} + Y^{N+1} - X^NY^N + \{\mbox{monomials of degree}< 2N\}.$$
This tells us exactly that $\Phi_N(X,Y)^{\mathrm{top}} = -X^NY^N$.
\end{proof}
\end{proposition}

\begin{example}
	The first two modular polynomials are the following.
\begin{equation}\label{2-3-mod_poly}
	\begin{aligned}
		\Phi_2(X,Y) &= X^3 + Y^3 - X^2Y^2 + 1488(X^2Y + XY^2) - 162000(X^2 + Y^2) +\\
		& + 40773375XY + 8748000000(X + Y) - 157464000000000;\\
		\Phi_3(X,Y)& = (X + Y)^4 - X^3Y^3 + 2232X^2Y^2(X + Y) + 36864000(X + Y)^3 + \\
		& - 1069960XY(X + Y)^2 + 2590058000X^2Y^2 + \\
		& + 8900112384000XY(X + Y) + 452984832000000(X + Y)^2 +\\
		& - 771751936000000000XY + 1855425871872000000000(X + Y).
	\end{aligned}
\end{equation}
Higher degree modular polynomials are larger. However, recall that they can be constructed algorithmically, for example by using the procedure outlined in \cite{Charles}.
\end{example}

One subtlety is that the vanishing of $\Phi_N(j_1,j_2)$ is equivalent to the existence of an isogeny which is defined over the algebraic closure $\overline{\mathbb{F}}_q$, while we are interested in working over a finite field. In our setup, this is not a problem. We address it in the following remark.

\begin{remark}
	First of all, we recall that every supersingular elliptic curve has $j$-invariant defined over $\mathbb{F}_{p^2}$ and it is isomorphic (over $\overline{\mathbb{F}}_p$) to an elliptic curve defined over $\mathbb{F}_{p^2}$ (\cite[Remark 9.7]{Silverman}).
	Now, if $E_1$ and $E_2$ are two supersingular elliptic curves defined over $\mathbb{F}_{p^2}$ that are connected via an isogeny of degree $N$ over $\overline{\mathbb{F}}_p$, then there exist two elliptic curves $E_1'$, $E_2'$, defined over $\mathbb{F}_{p^2}$, such that $E_1 \cong E_1'$ and $E_2 \cong E_2'$ over $\overline{\mathbb{F}}_p$ and there exists a $N$-isogeny between $E_1'$ and $E_2'$ defined over $\mathbb{F}_{p^2}$ (see \cite[Lemma 5.2]{JKPRST2018}).  Therefore, when considering chain of isogenies up to taking those isomorphic curves at the beginning and at the end, we may restrict without loss of generality to work only with supersingular curves and isogenies defined over $\mathbb{F}_{p^2}$.
\end{remark}

\subsection{The modeling}
Let $\ell\neq p$ be a prime number.
Given two supersingular elliptic curves $E_{\start}, E_{\finish}$ over $\mathbb{F}_{p^2}$ with  $j$-invariants $j_{\start}$ and $j_{\finish}$  that are connected via an isogeny $\varphi$ of degree $\ell^m$ ($m \in \mathbb{Z}_{\geq 1}$), we want to find the $j$-invariants of the curves in the path that connects them, i.e, the $j$-invariants of curves $E_1,\ldots,E_{m-1}$ such that

\begin{equation}\label{eq:chainisogenies}
E_{\start}\xrightarrow{\varphi_1} E_1\xrightarrow{\varphi_2}E_2\rightarrow\cdots\rightarrow E_{m-1}\xrightarrow{\varphi_m}  E_{\finish},
\end{equation}
where the $\varphi_i$'s are isogenies of degree $\ell$ such that
$$\varphi_m \circ \varphi_{m-1} \circ \cdots \circ \varphi_1 = \varphi.$$
Now, we can set up the following polynomial system.

\begin{tcolorbox}[
  title=\hypertarget{system1}{Polynomial system 1: Modular Polynomials},
  colframe=black,         
  colback=white,          
  coltitle=black,         
  colbacktitle=white,     
  boxrule=0.5pt,
  arc=4pt,
  fonttitle=\bfseries,
  sharp corners,
  top=1mm,
  bottom=0mm,
]
\raggedright $S_{m-1} = \mathbb{F}_{p^2}[j_1,\ldots,j_{m-1}]$\\
$\mathcal{M}_{\ell,m} = [\Phi_{\ell}(j_{\start},j_1),\Phi_{\ell}(j_1,j_2),\ldots,\Phi_{\ell}(j_{m-2},j_{m-1}),\Phi_{\ell}(j_{m-1},j_{\finish})] \subseteq S_{m-1}$
\begin{equation*}
\left\{
\begin{array}{l}
\Phi_{\ell}(j_{\start},j_1) = 0 \\
\Phi_{\ell}(j_1,j_2) = 0 \\
\cdots \cdots\\
\Phi_{\ell}(j_{m-2},j_{m-1}) = 0\\
\Phi_{\ell}(j_{m-1},j_{\finish}) = 0
\end{array}
\right.
\end{equation*}
\end{tcolorbox}

The properties of modular polynomials immediately give us the following result.

\begin{theorem}\label{theorem_modularpolynomials}
Let $E_{\start},E_{\finish}$ be two supersingular elliptic curves defined over $\mathbb{F}_{p^2}$ which are connected by an $\ell^m$-isogeny $\varphi$, with $\ell$ a prime number and $m \in \mathbb{Z}_{>1}$. Let $(\overline{j_1},\dots,\overline{j_{m-1}})\in\mathbb{F}_{p^2}^{m-1}$ be a solution of  Polynomial system~\hyperlink{system1}{1}, then $\overline{j_1},\dots,\overline{j_{m-1}}$ are the $j$-invariants of supersingular elliptic curves in a chain of isogenies of degree $\ell$ that connects  $E_{\start}$ and $E_{\finish}$ as in \eqref{eq:chainisogenies}.
\end{theorem}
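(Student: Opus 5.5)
The argument is short, and it rests entirely on the defining property of modular polynomials recalled above. Set $\overline{j_0}=j_{\start}$ and $\overline{j_m}=j_{\finish}$. Since $(\overline{j_1},\dots,\overline{j_{m-1}})$ is a solution of Polynomial system~\hyperlink{system1}{1}, we have $\Phi_\ell(\overline{j_{i-1}},\overline{j_i})=0$ for every $i=1,\dots,m$. All entries lie in $\mathbb{F}_{p^2}$.

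We build the chain inductively, starting from $E_0=E_{\start}$. Suppose a curve $E_{i-1}$ with $j(E_{i-1})=\overline{j_{i-1}}$ has already been constructed. Because $\Phi_\ell(\overline{j_{i-1}},\overline{j_i})=0$, the defining property of $\Phi_\ell$ gives elliptic curves with $j$-invariants $\overline{j_{i-1}}$ and $\overline{j_i}$ related by an $\ell$-isogeny over $\overline{\mathbb{F}}_p$.

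The domain of that isogeny has the same $j$-invariant as $E_{i-1}$, so the two curves are isomorphic over $\overline{\mathbb{F}}_p$. Composing with this isomorphism, we get an $\ell$-isogeny $\varphi_i\colon E_{i-1}\to E_i$ with $j(E_i)=\overline{j_i}$. Each $E_i$ is supersingular, because supersingularity is an isogeny invariant and $E_{\start}$ is supersingular. This in turn forces $\overline{j_i}\in\mathbb{F}_{p^2}$, which is consistent with the hypothesis.

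At the final step we have $j(E_m)=j_{\finish}$, so $E_m\cong E_{\finish}$. Composing with this isomorphism yields a chain of $\ell$-isogenies from $E_{\start}$ to $E_{\finish}$ as in \eqref{eq:chainisogenies}.

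The point needing care is the field of definition. By the preceding Remark, citing \cite[Lemma 5.2]{JKPRST2018}, we may replace each $E_i$ by an $\overline{\mathbb{F}}_p$-isomorphic model over $\mathbb{F}_{p^2}$ and take the isogenies to be defined over $\mathbb{F}_{p^2}$. Since the statement is only about $j$-invariants, it is enough to read everything up to isomorphism, so this causes no difficulty.

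We will also state plainly that the composite isogeny $E_{\start}\to E_{\finish}$ need not equal the given $\varphi$. The theorem claims only that the solution encodes some chain of $\ell$-isogenies, possibly with backtracking, and this is exactly what the argument delivers. The statement holds for any solution, regardless of which $\ell^m$-isogeny was originally given.
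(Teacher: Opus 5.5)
Your proposal is correct and is essentially the paper's argument. The paper gives no separate proof, saying only that the result follows immediately from the defining property of $\Phi_\ell$; your construction unpacks exactly that, by applying the property to each consecutive pair $(\overline{j_{i-1}},\overline{j_i})$, composing with $\overline{\mathbb{F}}_p$-isomorphisms, and using isogeny-invariance of supersingularity, and you are also right that the resulting chain need not compose to the given $\varphi$ and may backtrack.
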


Theorem~\ref{theorem_modularpolynomials} tells us that finding a solution of Polynomial system \hyperlink{system1}{1} allows us to reduce an instance of the Supersingular Isogeny Problem (SIP) of degree $\ell^m$ between  $E_{\start}$ and $E_{\finish}$ to $m$ instances of the SIP of degree $\ell$. When $\ell$ is small (such as $\ell=2,3$) the isogenies $\varphi_i$'s are easy to find. In fact, we point out that up to  $\ell \leq 31$, there exist SageMath libraries \cite{sagemath, Tsukazaki} that allow the efficient writing of all the $\ell$-isogenies of the decomposition. 

We postpone to Section~\ref{section:algebraicanalysis} and Section~\ref{section:experiments} an analysis of the algebraic properties of Polynomial system~\hyperlink{system1}{1} and a report on some experiments that we performed for $\ell=2,3$.

\section{Algebraic Modeling with Renes Polynomials in degree $2$}\label{section:Renes2}

In this section, we consider chain of isogenies of degree $2$. Given a supersingular elliptic curve $E$ we have three isogenies of degree $2$ with domain the curve $E$, indeed the $2$-isogeny graph is $3$-regular \cite{Kohel}. The kernel of a $2$-isogeny is a subgroup of order $2$. In other words, we have a degree $2$ isogeny associated to each non trivial element of $E[2]$. It turns out that when $E$ is in Montgomery form, these elements and the corresponding isogenies can be explicitly written thanks to Renes formulas \cite{Burdges-DeFeo, Costello-Hisil, Renes}. We use these formulas to construct polynomials and build an algebraic modeling for SIP.

\subsection{Montgomery curves and Renes formulas}\label{section:Renesformulas}
A supersingular elliptic curve $E$ over $\mathbb{F}_{p^2}$ is written in \textit{Montgomery form} if
$$E\colon By^2 = x^3 + A x^2 + x,$$
with $A,B \in \mathbb{F}_{p^2}$. It is clear that a nonsingular curve in Montgomery form is elliptic, while not every elliptic curve in Weierstrass form can be transformed into a Montgomery form. A necessary and sufficient condition is given in the following theorem from \cite{OKS00}. 

\begin{theorem}[\cite{OKS00}]
Let $E \colon y^2 = x^3 + a x+ b $ be an elliptic curve defined over $\mathbb{F}_{q}$. Then $E$ is isomorphic to a Montgomery curve $E' \colon By^2 = x^3 + Ax^2 + x$ over $\mathbb{F}_{q}$ if and only if
\begin{enumerate}
    \item $E$ has an $\mathbb{F}_{q}$-rational $2$-torsion point $(\alpha,0)$;
    \item $3 \alpha^2 + a = s^2$ for some $s \in \mathbb{F}_{q} \smallsetminus \{0\}$. 
\end{enumerate}
In this case, the coefficients of $E'$ are $A = 3 \alpha s^{-1}$ and $B = s^{-1}$.
\end{theorem}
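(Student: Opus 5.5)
We prove both directions by explicit change of coordinates. Every isomorphism between short Weierstrass curves and Montgomery-type curves over $\mathbb{F}_q$ is an affine substitution $x\mapsto ux+r$, $y\mapsto vy$ with $u,v\in\mathbb{F}_q^*$. Such a substitution sends the $2$-torsion points, i.e. the points with $y=0$, to $2$-torsion points.

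For the forward direction, suppose $E\cong E'\colon By^2=x^3+Ax^2+x$ over $\mathbb{F}_q$. The point $(0,0)$ is an $\mathbb{F}_q$-rational $2$-torsion point of $E'$, and its image under the isomorphism is a rational point $(\alpha,0)$ of $E$, which gives condition (1). For condition (2) we make the isomorphism explicit. Multiply the Montgomery equation by $B^{-3}$ and set $X=x/B$, $Y=y/B$. This gives $Y^2=X^3+(A/B)X^2+(1/B^2)X$. Completing the cube with $X=X'-A/(3B)$ yields a short Weierstrass model whose linear coefficient is $a'=(3-A^2)/(3B^2)$ and whose $2$-torsion point is $\alpha'=A/(3B)$. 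A direct computation gives $3\alpha'^2+a'=A^2/(3B^2)+(3-A^2)/(3B^2)=1/B^2$, a nonzero square.

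To transport this to the given $E$, note that any $\mathbb{F}_q$-isomorphism between short Weierstrass forms is $(x,y)\mapsto(\mu^2x,\mu^3y)$ with $\mu\in\mathbb{F}_q^*$. This map scales $\alpha$ by $\mu^2$ and $a$ by $\mu^4$. Hence $3\alpha^2+a=\mu^4B^{-2}=(\mu^2B^{-1})^2$, which is again a nonzero square, giving condition (2).

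For the converse, assume conditions (1) and (2) and put $s$ as in the statement. Substitute $x=X+\alpha$. Using $\alpha^3+a\alpha+b=0$, the equation becomes $y^2=X^3+3\alpha X^2+(3\alpha^2+a)X=X^3+3\alpha X^2+s^2X$. Next set $X=sx'$ and $y=sy'$. Dividing by $s^3$ gives $s^{-1}y'^2=x'^3+3\alpha s^{-1}x'^2+x'$. This is a Montgomery curve with $A=3\alpha s^{-1}$ and $B=s^{-1}$, as claimed. The substitution is invertible over $\mathbb{F}_q$ because $s\neq0$.

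The key point of the whole argument is that $s\neq 0$: it is what makes the scaling $X=sx'$ legitimate, and this is exactly where condition (2) enters. The main bookkeeping, and the only real obstacle, is the forward direction. There one must justify that $\mathbb{F}_q$-isomorphisms are affine and handle the scaling by $\mu$. In particular, condition (2) must be checked for whichever rational $2$-torsion point corresponds to $(0,0)$, not for an arbitrary one. We settle this by choosing $\alpha$ to be the image of $(0,0)$ under the isomorphism.
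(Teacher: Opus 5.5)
Your proof is correct. In the converse, the substitution $x=sx'+\alpha$, $y=sy'$ gives exactly $B=s^{-1}$ and $A=3\alpha s^{-1}$. In the forward direction, the identity $3\alpha'^2+a'=B^{-2}$ on the normalized model, transported by $(x,y)\mapsto(\mu^2x,\mu^3y)$, gives $3\alpha^2+a=(\mu^2B^{-1})^2\neq 0$ for the image $(\alpha,0)$ of $(0,0)$.

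The paper gives no proof of its own; it quotes the result from \cite{OKS00}, and your explicit change-of-variables argument is the standard one. You take two facts for granted, both standard. First, an $\mathbb{F}_q$-isomorphism may be assumed to fix the point at infinity; otherwise compose with translation by the rational image of $O$. Second, isomorphisms between Weierstrass models are classified by \cite[Prop.~III.3.1]{Silverman}, which also gives $\mu^2\in\mathbb{F}_q$.
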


\begin{remark}
If $E\colon B y^2 = x^3 + Ax^2 + x$ is an elliptic curve in Montgomery form over $\mathbb{F}_{p^2}$, then it is isomorphic (over $\mathbb{F}_{p^4}$) to the curve
$E' \colon y^2 = x^3 + Ax^2 + x$, which is a quadratic twist of $E$.
Thus, for simplicity in what follows we will restrict to curves in Montgomery form with  $B = 1$ and  $A \in \mathbb{F}_{p^2}$.
\end{remark}
We consider a supersingular elliptic curve $E$ written in Montgomery form as
\begin{equation}\label{eq:montgomery}
E:y^2 = x^3 + Ax^2 + x.	
	\end{equation}
The $2$-torsion points of $E$ are
\[
E[2]=\{O_E,(0,0), (\alpha,0),(\alpha^{-1},0)\},
\]
where $\alpha \in \mathbb{F}_{p^4}$ is a root of $x^2 + Ax + 1$, and $A = - \alpha - \alpha^{-1}$.
For curves with $j$-invariant $\neq0,1728$ the roots $\alpha,\alpha^{-1}$ of $x^2 + Ax + 1$ are indeed in $\mathbb{F}_{p^2}$.

\begin{lemma}\label{lem:2torsion-rationality}
Let $E$ be a supersingular elliptic curve defined over $\mathbb{F}_{p^2}$ such that $|E(\mathbb{F}_{p^2})| = (p \pm 1)^2$. Then $E[2] \subseteq E(\mathbb{F}_{p^2})$, i.e., the $2$-torsion points of $E$ are $\mathbb{F}_{p^2}$-rational. 
\end{lemma}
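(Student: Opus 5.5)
We will deduce the statement from the action of the $p^2$-power Frobenius $\pi$ on $E$. Since $E$ is supersingular and defined over $\mathbb{F}_{p^2}$, the trace $t$ of $\pi$ satisfies $|E(\mathbb{F}_{p^2})| = p^2+1-t$. The hypothesis $|E(\mathbb{F}_{p^2})|=(p\pm1)^2 = p^2\pm 2p+1$ then forces $t=\mp 2p$. Hence the characteristic polynomial of $\pi$ is
\[
X^2 - tX + p^2 = X^2 \pm 2pX + p^2 = (X\pm p)^2 .
\]

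The key step is to upgrade this to the identity $\pi = [\mp p]$ in $\operatorname{End}(E)$. We will use the standard argument. Put $\psi = \pi \pm [p]$. From $(\pi\pm p)^2=0$ we get $\psi^2=0$ in $\operatorname{End}(E)$. The endomorphism ring of an elliptic curve has no zero divisors, since degrees are multiplicative. Therefore $\psi=0$, that is, $\pi = [\mp p]$.

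This step is the main (mild) obstacle. The tools are that the Frobenius satisfies its characteristic polynomial in $\operatorname{End}(E)$ and that $\operatorname{End}(E)$ is a domain; both are standard facts, cited e.g. from Silverman, Chapter V. An alternative would be to argue that $\psi$ has degree $\deg(\pi\pm p)=p^2+t(\pm p)\cdot(-1)\ldots$; concretely, $\widehat{\psi}\psi = (\widehat\pi \pm p)(\pi\pm p) = p^2 \pm p\,t + p^2 = 0$ for the right sign, so $\deg\psi = 0$. We would present whichever is cleaner.

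Finally, for $P\in E[2]$ we have $\pi(P) = [\mp p]P = P$, because $p$ is odd and $2P=O_E$. A point fixed by the $p^2$-Frobenius has coordinates in $\mathbb{F}_{p^2}$. Thus $E[2]\subseteq E(\mathbb{F}_{p^2})$.

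The same argument gives more generally $E[p\pm1]\subseteq E(\mathbb{F}_{p^2})$. Alternatively, this follows from the group structure $E(\mathbb{F}_{p^2})\cong(\mathbb{Z}/(p\pm1)\mathbb{Z})^2$, which we could cite instead. Either way, $2\mid p\pm1$ yields the claim.
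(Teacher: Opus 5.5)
Your proof is correct, but it takes a different route from the paper's.

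The paper cites Schoof's structure result $E(\mathbb{F}_{p^2})\cong(\mathbb{Z}/(p\pm1)\mathbb{Z})^2$ and then argues purely group-theoretically. Since $2\mid p\pm1$, the $2$-torsion of $E(\mathbb{F}_{p^2})$ is $(\mathbb{Z}/2\mathbb{Z})^2$, which has four elements and therefore exhausts $E[2]$.

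You instead prove the underlying fact $\pi=[\mp p]$ from two standard properties: Frobenius satisfies $\pi^2-t\pi+p^2=0$ in $\operatorname{End}(E)$, and $\operatorname{End}(E)$ has no zero divisors. You then note that $[\mp p]$ is the identity on $E[2]$ because $p$ is odd.

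Your argument is self-contained and in fact reproves the cited result, since $E(\mathbb{F}_{p^2})=\ker(\pi-1)=E[p\pm1]$. It also transfers directly to Lemma~\ref{lem:3torsion-rationality}: there the congruence hypothesis on $p$ makes $[\mp p]$ act as the identity on $E[3]$. The paper's version is shorter only because it hands exactly this step to the reference.

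When you write it up, keep the zero-divisor argument and drop the garbled ``$\deg(\pi\pm p)=\ldots$'' sentence. The clean form of that alternative is simply $\widehat{\psi}\psi=2p^2\pm pt=0$.
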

\begin{proof}
Thanks to \cite[Lemma 4.8 (ii)]{Schoof}, we have that $E(\mathbb{F}_{p^2}) \cong \mathbb{Z}/(p+1)\mathbb{Z} \times \mathbb{Z}/(p+1)\mathbb{Z}$ or $E(\mathbb{F}_{p^2}) \cong \mathbb{Z}/(p-1)\mathbb{Z} \times \mathbb{Z}/(p-1)\mathbb{Z}$. On the other hand, we have that $E[2] \cong \mathbb{Z}/2\mathbb{Z} \times \mathbb{Z}/2\mathbb{Z}$. Since $2 \mid p \pm 1$, it holds that $\mathbb{Z}/(p \pm 1)\mathbb{Z}$ has a subgroup of order $2$, and it must be $\mathbb{Z}/2\mathbb{Z}$. So we must have that $E[2] \subseteq E(\mathbb{F}_{p^2})$. 
\end{proof}

\begin{remark}\label{rem:numofratpointsperjinv}
	The condition $|E(\mathbb{F}_{p^2})| = (p \pm 1)^2$ in Lemma~\ref{lem:2torsion-rationality} may seem restrictive, however it is satisfied by any supersingular elliptic curve $E$ with $j(E) \not= 0,1728$.
	Indeed, let $t=p^2+1-|E(\mathbb{F}_{p^2})|$ be the trace of the Frobenius endomorphism $\phi_{p^2}$ on $E$. Since $E$ is supersingular, by Hasse's bound we have that $t\in\{0,\pm p,\pm 2p\}$. If $t=0$, then the characteristic polynomial of the Frobenius $\phi_{p^2}$ is $h(x)=x^2+p^2$. Since $h(x)$ has a root $\phi_{p^2}$ in $\mathrm{End}(E)$, its discriminant $\Delta=-4p^2$ must be a square.  This implies that $E$ has a nontrivial automorphism of order $4$. Since the characteristic of the field is $\neq2,3$, this happens only if $j(E)=1728$ \cite[Theorem~10.1]{Silverman}. With a similar argument one can show that  $t=\pm p$ forces $j(E)=0$. Thus, if  $j(E) \not= 0,1728$, we have $t=\pm 2p$ and $|E(\mathbb{F}_{p^2})| = (p \pm 1)^2$ (see also \cite{ADJ2019}).
\end{remark}

Now, we consider a supersingular elliptic curve $E$ over $\mathbb{F}_{p^2}$ written in Montgomery form as in~\eqref{eq:montgomery} with  $|E(\mathbb{F}_{p^2})| = (p \pm 1)^2$. By Lemma~\ref{lem:2torsion-rationality},  the $2$-torsion points of $E$ are in $\mathbb{F}_{p^2}$. Thus, we can factor the polynomial $x^2+Ax+1$ over $\mathbb{F}_{p^2}$ and write $E$ in \textit{simplified Montgomery form} \cite{Burdges-DeFeo}:
$$E\colon y^2 = x(x - \alpha)(x - \alpha^{-1}),$$
with $\alpha \in \mathbb{F}_{p^2}$.
The subgroups of order $2$ $\langle(\alpha,0)\rangle$ and $\langle(\alpha^{-1},0)\rangle$ are the kernels of two distinct $2$-isogenies $\varphi_1$ and $\varphi_2$ with domain $E$. It turns out that the codomain elliptic curves of $\varphi_1$ and $\varphi_2$ can also be written in simplified Montgomery form.

The following lemma due to Renes \cite[Proposition 2]{Renes} formalizes our claims.

\begin{lemma}[Renes formulas, degree 2]\label{lemma:Renesformulas}
Let $E$ be a supersingular elliptic curve over $\mathbb{F}_{p^2}$ written in simplified Montgomery form $y^2 = x(x - \alpha)(x - \alpha^{-1})$. Then, there exist two $2$-isogenies $\varphi_1 \colon E \rightarrow E_1$, $\varphi_2 \colon E \rightarrow E_2$ defined over $\mathbb{F}_{p^2}$ such that $E_1$, $E_2$ are defined over $\mathbb{F}_{p^2}$ and can be written in Montgomery form as
$$E_1\colon y^2 = x^3 + A_1 x^2 + x,\quad E_2\colon y^2 = x^3 + A_2 x^2 + x,$$
where $A_1 = 2 - 4 \alpha^2$ and  $A_2 = 2 - 4(\alpha^{-1})^2$.
\end{lemma}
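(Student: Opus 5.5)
The plan is to use Vélu's formulas for the isogeny with kernel $\langle(\alpha,0)\rangle$, simplified by first translating the kernel point to the origin. The roles of $\alpha$ and $\alpha^{-1}$ are symmetric: swapping them leaves the curve $y^2=x(x-\alpha)(x-\alpha^{-1})$ unchanged. So it suffices to construct $\varphi_1$; the formula for $A_2$ then follows by replacing $\alpha$ with $\alpha^{-1}$. Rationality over $\mathbb{F}_{p^2}$ will be automatic: $\alpha\in\mathbb{F}_{p^2}$, all the maps below have coefficients in $\mathbb{Q}(\alpha)$, and a $2$-isogeny whose kernel is generated by a rational point is defined over the base field.

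First I substitute $x\mapsto x+\alpha$ to move $(\alpha,0)$ to $(0,0)$. This gives the curve $y^2=x(x+\alpha)(x+\alpha-\alpha^{-1})=x(x^2+ax+b)$ with
$$a=2\alpha-\alpha^{-1},\qquad b=\alpha(\alpha-\alpha^{-1})=\alpha^2-1.$$
For a curve $y^2=x(x^2+ax+b)$ the $2$-isogeny with kernel $\langle(0,0)\rangle$ is the classical one,
$$(x,y)\mapsto\left(\frac{y^2}{x^2},\ \frac{y(b-x^2)}{x^2}\right),$$
and its codomain is $Y^2=X(X^2-2aX+a^2-4b)$. I take this map as known (Silverman, Example~III.4.5), or verify it by direct substitution.

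Next I compute the new coefficients: $a^2-4b=(2\alpha-\alpha^{-1})^2-4\alpha^2+4=\alpha^{-2}$, and $-2a=2\alpha^{-1}-4\alpha$. The codomain is therefore $Y^2=X^3+(2\alpha^{-1}-4\alpha)X^2+\alpha^{-2}X$. To normalize the linear coefficient to $1$, I rescale by $X=\alpha^{-1}X'$ and $Y=\alpha^{-3/2}Y'$. This turns the curve into
$$Y'^2=X'^3+\alpha(2\alpha^{-1}-4\alpha)X'^2+X'=X'^3+(2-4\alpha^2)X'^2+X'.$$
That is exactly $A_1=2-4\alpha^2$.

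The main obstacle is that the $Y$-scaling by $\alpha^{-3/2}$ might not be defined over $\mathbb{F}_{p^2}$. I would handle this in the same way as the Remark on $B=1$: the equation $\alpha^{-3}Y^2=\dots$ is a Montgomery curve with $B=\alpha$, and it is a quadratic twist of $y^2=x^3+A_1x^2+x$. There are two routes. One is to show that $\alpha$ is a square in $\mathbb{F}_{p^2}$, using full rationality of $E[4]$ under $E(\mathbb{F}_{p^2})\cong(\mathbb{Z}/(p\pm1))^2$ when $4\mid p\pm1$ (via Lemma~\ref{lem:2torsion-rationality} and Schoof). Otherwise, I would argue as in that Remark that we work with curves up to twist, so the target curve in Montgomery form with $A_1=2-4\alpha^2$ is the one intended. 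The rest is routine algebra.
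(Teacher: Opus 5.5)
The paper does not prove this lemma; it imports it from Renes \cite[Proposition 2]{Renes}. Your computation is a correct, self-contained substitute. You translate $(\alpha,0)$ to the origin, apply Silverman's Example~III.4.5, and rescale $X'=\alpha X$. This gives the $x$-map $x\mapsto \alpha y^2/(x-\alpha)^2=x(\alpha x-1)/(x-\alpha)$, which is the $x$-map of Renes' formula, so both routes produce the same isogeny. Your values $a^2-4b=\alpha^{-2}$ and $A_1=\alpha(2\alpha^{-1}-4\alpha)=2-4\alpha^2$ are right, $\varphi_2$ follows from the symmetry $\alpha\leftrightarrow\alpha^{-1}$, and the isogeny to your intermediate curve visibly has coefficients in $\mathbb{F}_{p^2}$. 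The citation buys brevity and a uniform framework (the same source gives Lemma~\ref{lemma:Renes3}). Your derivation buys verifiability, and it exposes the normalization of $B$ that the citation hides.

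Sharpen your last paragraph, though: your two ``routes'' are not interchangeable, because the twist is forced in about half the cases.
\begin{itemize}
\item The points halving $(0,0)$ on $y^2=x^3+Ax^2+x$ have $x=\pm1$ and $y^2=-(\alpha\mp1)^2/\alpha$. Since $-1$ is a square in $\mathbb{F}_{p^2}$, $\alpha$ is a square if and only if $(0,0)\in 2E(\mathbb{F}_{p^2})$.
\item Suppose $|E(\mathbb{F}_{p^2})|=(p+1)^2$ with $p\equiv 1\pmod{4}$, or $|E(\mathbb{F}_{p^2})|=(p-1)^2$ with $p\equiv 3\pmod{4}$. Then $E(\mathbb{F}_{p^2})$ has no point of order $4$, so $\alpha$ is a non-square.
\item In that case the $\mathbb{F}_{p^2}$-rational codomain is isomorphic to $\alpha Y^2=X^3+A_1X^2+X$. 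The $B=1$ model $Y^2=X^3+A_1X^2+X$ is its nontrivial quadratic twist, with $2(p^2+1)-|E(\mathbb{F}_{p^2})|\neq|E(\mathbb{F}_{p^2})|$ points.
\item By Tate's theorem, no $\mathbb{F}_{p^2}$-rational $2$-isogeny from $E$ lands on that model, so the lemma read literally over $\mathbb{F}_{p^2}$ is false there.
\end{itemize}
Your $E[4]$ argument covers exactly the complementary cases. The sign in ``$4\mid p\pm1$'' must match $|E(\mathbb{F}_{p^2})|$, since $4$ always divides one of $p\pm1$. In the remaining cases the statement holds only under the paper's convention of discarding $B$, i.e.\ isomorphism over $\mathbb{F}_{p^4}$. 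Say this explicitly rather than offering the twist as an optional fallback; with that, your proof establishes the lemma in the sense the paper intends.
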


\begin{remark}
The remaining $2$-isogeny with domain $E$ is the one with kernel $\langle(0,0)\rangle$. We will not keep track of this isogeny in the algebraic modeling we are going to construct. This is because when we have a chain of $2$-isogenies, the isogeny with kernel $\langle(0,0)\rangle$ (apart from the first step of the chain) is the backtracking isogeny, i.e., the dual isogeny to the previous step in the chain (see \cite[Corollary 1]{Renes}).
\end{remark}

\subsection{Renes polynomials}

We are going to use Renes formulas to construct polynomials which we will use in our modeling. First, we define them and then we explain their properties.

\begin{definition}\label{def:Renespolynomials2}
The \emph{Renes polynomials of degree $2$} are the following polynomials in $\mathbb{Z}[X,Y]$:
\[
\begin{split}
	\Psi_1(X,Y) &= XY + Y^2 + 1; \\
	\Psi_2(X,Y) &= -4X^2Y + Y^2 + 2Y + 1; \\
	\Psi_3(X,Y) &= 65536X^{12} - 196608X^{10} + 208896X^8 - 90112X^6 - X^4Y + 13056X^4 +\\
	&+  X^2Y - 768X^2 + 16.
\end{split}
\]	
\end{definition}

\begin{proposition}\label{prop:Renespolynomialsdeg2}
Let $E$ be a supersingular elliptic curve over $\mathbb{F}_{p^2}$ such that $|E(\mathbb{F}_{p^2})| = (p \pm 1)^2$ written in Montgomery form $E:y^2 = x^3 + Ax^2 + x$. Let  $\varphi_1 \colon E \rightarrow E_1$, $\varphi_2 \colon E \rightarrow E_2$ be two degree $2$ isogenies (defined over $\mathbb{F}_{p^2}$) with kernel $\neq\langle(0,0)\rangle$. The following facts hold.
\begin{enumerate}
	\item Let $\alpha,\alpha^{-1}\in\mathbb{F}_{p^2}$ be the roots of $\Psi_1(A,Y)$. Then, the simplified Montgomery form of $E$ is $y^2 = x(x - \alpha)(x - \alpha^{-1})$.
	\item Let $\gamma_1,\gamma_2,\gamma_3,\gamma_4$ be the four roots of $\Psi_2(\alpha,Y)$ and $\Psi_2(\alpha^{-1},Y)$. Then, up to relabeling, $\gamma_3 = \gamma_1^{-1}$, $\gamma_4 = \gamma_2^{-1}$ and the simplified Montgomery forms of $E_1$ and $E_2$ are 
	$$E_i \colon y^2 = x(x- \gamma_i)(x- \gamma_i^{-1}).$$
\end{enumerate}
\end{proposition}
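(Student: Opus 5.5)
The plan is to reduce both parts to one observation: for a Montgomery curve $y^2=x^3+Cx^2+x$, the nonzero $2$-torsion abscissae are the roots of $Y^2+CY+1$. I then identify $\Psi_1(A,Y)$ and $\Psi_2(\alpha^{\pm1},Y)$ as exactly these quadratics for $E$, $E_1$ and $E_2$.

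For part (1), note that $\Psi_1(A,Y)=Y^2+AY+1$. Hence $x^3+Ax^2+x=x\,\Psi_1(A,x)$. Since $|E(\mathbb{F}_{p^2})|=(p\pm1)^2$, Lemma~\ref{lem:2torsion-rationality} gives $E[2]\subseteq E(\mathbb{F}_{p^2})$, so both roots of $\Psi_1(A,Y)$ lie in $\mathbb{F}_{p^2}$. The constant term is $1$, so these roots are $\alpha,\alpha^{-1}$ (with $\alpha\neq0$). This gives $x^3+Ax^2+x=x(x-\alpha)(x-\alpha^{-1})$, which is the simplified Montgomery form.

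For part (2), the nontrivial $2$-torsion of $E$ is $(0,0),(\alpha,0),(\alpha^{-1},0)$. The two $2$-isogenies whose kernel is not $\langle(0,0)\rangle$ are therefore those with kernels $\langle(\alpha,0)\rangle$ and $\langle(\alpha^{-1},0)\rangle$. By Lemma~\ref{lemma:Renesformulas}, their codomains are, up to isomorphism over $\mathbb{F}_{p^2}$, $E_1\colon y^2=x^3+A_1x^2+x$ with $A_1=2-4\alpha^2$ and $E_2$ with $A_2=2-4\alpha^{-2}$. Expanding gives
\[
Y^2+A_1Y+1=-4\alpha^2Y+Y^2+2Y+1=\Psi_2(\alpha,Y),
\]
and similarly $Y^2+A_2Y+1=\Psi_2(\alpha^{-1},Y)$. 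Each of these quadratics has constant term $1$, so its two roots are mutually inverse; this gives the relabeling $\gamma_3=\gamma_1^{-1}$ and $\gamma_4=\gamma_2^{-1}$. Then $x^3+A_ix^2+x=x(x-\gamma_i)(x-\gamma_i^{-1})$, as claimed.

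The one point needing care is showing that $\gamma_i\in\mathbb{F}_{p^2}$, so that this really is a simplified Montgomery form over $\mathbb{F}_{p^2}$. For this I would show that $E_i$ again satisfies the hypothesis of Lemma~\ref{lem:2torsion-rationality}. Since $\varphi_i$ is defined over $\mathbb{F}_{p^2}$, Tate's theorem (isogenous curves over a finite field have the same number of rational points) gives $|E_i(\mathbb{F}_{p^2})|=|E(\mathbb{F}_{p^2})|=(p\pm1)^2$. Lemma~\ref{lem:2torsion-rationality} then puts the roots of $\Psi_2(\alpha^{\pm1},Y)$ in $\mathbb{F}_{p^2}$.

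I also need $\gamma_i\neq\gamma_i^{-1}$. This holds because $E_i$ is nonsingular, so $A_i\neq\pm2$ and the discriminant $A_i^2-4$ is nonzero. I expect this rationality and isogeny-invariance step to be the main subtlety; everything else is direct substitution.
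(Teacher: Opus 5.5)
Your proposal is correct and follows essentially the same route as the paper. Part (1) uses Lemma~\ref{lem:2torsion-rationality}, and part (2) uses Lemma~\ref{lemma:Renesformulas}, Tate's theorem and Lemma~\ref{lem:2torsion-rationality} again to see that $\gamma_i$ is a root of $Y^2+(2-4\beta^2)Y+1=\Psi_2(\beta,Y)$ with $\beta\in\{\alpha,\alpha^{-1}\}$. The paper reaches $\Psi_2$ via the quadratic formula followed by squaring, whereas you compare coefficients directly and also make explicit the reciprocal-root relabeling and $\gamma_i\neq\gamma_i^{-1}$, which the paper leaves implicit.
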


\begin{proof}
	\begin{enumerate}
		\item This is clear from the discussion in \S\ref{section:Renesformulas}. Notice that $\alpha,\alpha^{-1}\in\mathbb{F}_{p^2}$ by Lemma~\ref{lem:2torsion-rationality}.
		\item Let $i\in\{1,2\}$. By Lemma~\ref{lemma:Renesformulas}, also $E_i$ can be written in Montgomery form. Moreover, Tate's Theorem (\cite[Section 3 Theorem 1]{Tate}) yields $|E_i(\mathbb{F}_q)| = |E(\mathbb{F}_q)|= (p \pm 1)^2$. Thus, by Lemma~\ref{lem:2torsion-rationality} $E_i$ can be written in simplified Montgomery form
		$$E_i \colon y^2 = x(x- \gamma)(x- \gamma^{-1}),$$
		where $\gamma$ is a root of $x^2 + (2 - 4\beta^2)x + 1$ with $\beta \in \{\alpha,\alpha^{-1}\}$.
		 So, we have
		\begin{equation}\label{eq:gammai}
			\gamma_i = - 1 + 2\beta^2 + c_{i}, \ \ i=1,2,  
		\end{equation}
		where the $c_i$'s are the two square roots of the discriminant $4\beta^4 - 4\beta^2$. If we isolate $c_i$ in \eqref{eq:gammai} and square both sides we get that $\gamma$ must satisfy
		\begin{equation}\label{eq:Renform}
			-4 \beta^2 \gamma + \gamma^2 + 2\gamma + 1 = 0,
		\end{equation}
		that is $\gamma$ is a root of $\Psi_2(\beta,Y)$.
		\end{enumerate}
        \end{proof}

       \begin{proposition}\label{prop:propertiesofPsi3}
        Let $E$ be a supersingular elliptic curve over $\mathbb{F}_{p^2}$ in Montgomery form. 
           Let $\delta\in\mathbb{F}_{p^2}$ be a root of $\Psi_3(X,j(E))$. Then
there exists an isogeny of degree $2$ (defined over $\mathbb{F}_{p^2}$)  from $y^2=x(x-\delta)(x-\delta^{-1})$ to an elliptic curve $y^2 = x^3 + (2 - 4\delta^2)x^2 + x$ with $j$-invariant $j(E)$.
       \end{proposition}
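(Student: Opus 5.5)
The plan is to show that $\Psi_3$ is, up to clearing denominators, the classical $j$-invariant formula for a Montgomery curve, evaluated at the coefficient $A'=2-4X^2$ produced by Lemma~\ref{lemma:Renesformulas}. Recall that a Montgomery curve $y^2=x^3+A'x^2+x$ has
\[
j = \frac{256\,(A'^2-3)^3}{A'^2-4}.
\]
Substitute $A'=2-4\delta^2$. Then
\[
A'^2-4=(A'-2)(A'+2)=16(\delta^4-\delta^2), \qquad A'^2-3=16\delta^4-16\delta^2+1 .
\]
This gives
\[
j\bigl(y^2=x^3+(2-4\delta^2)x^2+x\bigr)=\frac{16\,(16\delta^4-16\delta^2+1)^3}{\delta^4-\delta^2}.
\]
The first key step is a routine expansion: check that
\[
\Psi_3(X,Y)=16(16X^4-16X^2+1)^3-Y\,(X^4-X^2).
\]
For instance, the leading coefficient is $16\cdot 16^3=65536$ and the constant term is $16$. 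The middle coefficients only need to be matched.

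Next I will rule out degenerate roots. If $\delta=0$ or $\delta^2=1$, then $\Psi_3(\delta,Y)$ reduces to the nonzero constant $16$, because $16\delta^4-16\delta^2+1=1$ in those cases. So any root $\delta$ of $\Psi_3(X,j(E))$ satisfies $\delta\neq0$ and $\delta\neq\pm1$.

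Consequently:
\begin{itemize}
\item $\delta^{-1}$ exists and $\delta\neq\delta^{-1}$, so the cubic $x(x-\delta)(x-\delta^{-1})$ has distinct roots. Hence $E_\delta\colon y^2=x(x-\delta)(x-\delta^{-1})$ is a nonsingular curve defined over $\mathbb{F}_{p^2}$, since $\delta\in\mathbb{F}_{p^2}$.
\item $\delta^4-\delta^2\neq0$, so $\Psi_3(\delta,j(E))=0$ can be solved for $j(E)$, giving $j(E)=16(16\delta^4-16\delta^2+1)^3/(\delta^4-\delta^2)$.
\end{itemize}
By the computation above, this says exactly that $j(E)$ is the $j$-invariant of $E'\colon y^2=x^3+(2-4\delta^2)x^2+x$. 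The same computation shows $A'^2\neq4$, so $E'$ is nonsingular.

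It remains to produce the isogeny $E_\delta\to E'$. I will apply Lemma~\ref{lemma:Renesformulas} with $\alpha=\delta$. It gives a $2$-isogeny $\varphi_1$ with kernel $\langle(\delta,0)\rangle$, defined over $\mathbb{F}_{p^2}$ because $(\delta,0)$ is an $\mathbb{F}_{p^2}$-rational point, with codomain $y^2=x^3+(2-4\delta^2)x^2+x=E'$.

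The one subtle point, which I expect to be the main obstacle, is that Lemma~\ref{lemma:Renesformulas} is stated for supersingular $E_\delta$. I would handle this in one of two ways.
\begin{itemize}
\item Observe that Renes' formula is a purely algebraic (Vélu-type) identity, valid for any nonsingular curve in simplified Montgomery form. So the isogeny exists regardless of supersingularity.
\item Argue directly: $E'$ has $j$-invariant $j(E)$ with $E$ supersingular, so $E'$ is supersingular. Since $E_\delta$ and $E'$ are related by a $2$-isogeny, and supersingularity is an isogeny invariant, $E_\delta$ is also supersingular. Lemma~\ref{lemma:Renesformulas} then applies as stated.
\end{itemize}
To avoid circularity in the second route, I would rely on the first observation to obtain the isogeny, and use the second only as a consistency check.
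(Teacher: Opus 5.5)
Your proposal is correct and is essentially the paper's argument. Your identity $\Psi_3(X,Y)=16(16X^4-16X^2+1)^3-Y(X^4-X^2)$ is exactly the paper's observation that $\Psi_3(X,j(E))=-\tfrac{1}{16}f(2-4X^2)$, where $f(Z)=(Z^2-4)j(E)-256(Z^2-3)^3$; the paper also obtains $\Psi_3$ as the generator of the elimination ideal of $(2-4X^2-Z,\,f(Z))$. After that, both you and the paper apply Renes' formula with $\alpha=\delta$ to get the $2$-isogeny.

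Two points in your write-up go beyond the paper. Your exclusion of $\delta\in\{0,\pm1\}$ via $\Psi_3(\delta,Y)=16$ is handled in the paper only in a remark after the proof. Your treatment of the supersingularity hypothesis of Lemma~\ref{lemma:Renesformulas} is not addressed in the paper at all.
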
 
       \begin{proof} 
        We recall that we have six possible elliptic curves in Montgomery form with the same $j$-invariant of $E$. Indeed, their $A$'s are given by the relation 
		$$j(E) = 256\frac{(A^2 - 3)^3}{A^2 - 4}.$$
        Clearing the denominator in the previous equation we get that the roots of the polynomial $f(Z)=(Z^2 - 4)j(E) - 256(Z^2 - 3)^3$ are precisely these six possible $A$'s. Now, let $g(X,Z)=2 - 4X^2 - Z$ be the polynomial obtained from Renes formulas (Lemma~\ref{lemma:Renesformulas}). In particular, for a fixed value $\delta$, the roots of $g(X,\delta)$ are precisely the $A$-coefficients of the Montgomery forms of the codomain elliptic curves of the $2$-isogenies from  $y^2=x(x-\delta)(x-\delta^{-1})$.
        We consider the elimination ideal $I=(g(X,Z),f(Z))\cap\mathbb{Q}[X]$. A computation shows that $I$ is generated by the polynomial
        \begin{align*}
			& 65536X^{12} - 196608X^{10} + 208896X^8 - 90112X^6 - X^4 j(E) + 13056X^4 +\\
			&+  X^2 j(E) - 768X^2 + 16 \in \mathbb{Z}[X].
		\end{align*}
        This is exactly $\Psi_3(X,j(E))$. Alternatively, one can check that $\Psi_3(X,j(E)) = -\frac{1}{16}f(2 - 4X^2)$.
        
Now, let $\delta \in \mathbb{F}_{p^2}$ be a root of $\Psi_3(X,j(E))$. By construction, $\delta$ satisfies $f(2- 4\delta^2) = 0$.
       Thus,  $A'=2 - 4\delta^2$ is one of the six possible Montgomery coefficients $A$'s coming from $j(E)$. From what is said above, we know that there exists an isogeny of degree 2 (defined over $\mathbb{F}_{p^2}$) from $y^2=x(x-\delta)(x-\delta^{-1})$ to the elliptic curve $y^2 = x^3 + A'x^2 + x$ with $j$-invariant $j(E)$.
	\end{proof}

\begin{remark}
Notice that the curve $y^2 = x(x-\delta)(x-\delta^{-1})$ of Proposition~\ref{prop:propertiesofPsi3} is not an elliptic curve iff $\delta \in \{0,\pm 1\}$. Indeed, if $\delta \in \{0,\pm 1\}$ then $A' \in \{\pm 2\}$ which implies that $j(E)$ is not well-defined.
\end{remark}

\subsection{The modeling}

We assume the following setup. We have two supersingular elliptic curves $E_{\start}$ and $E_{\finish}$ over $\mathbb{F}_{p^2}$ in Montgomery form (with $B = 1$) that are connected via an isogeny $\varphi$ of degree $2^m$ ($m \in \mathbb{Z}_{\geq 1}$). Moreover, suppose $|E_{\start}(\mathbb{F}_{p^2})| = (p \pm 1)^2$, thus by Lemma~\ref{lem:2torsion-rationality}, $E_{\start}$ can be put in simplified Montgomery form, and by Tate's Theorem the same applies to $E_{\finish}$ and all elliptic curves in the chain.
Finally, we assume that the first $2$-isogeny (the one with domain $E_{\start}$) in the decomposition of $\varphi$ is not the one associated with the $2$-torsion point $(0,0)$. Let $A_{\start}, A_{\finish} \in \mathbb{F}_{p^2}$ be their Montgomery $x^2$-coefficients and $j(E_{\finish})$ the $j$-invariant of $E_{\finish}$ computed from $A_{\finish}$. Using Renes polynomials, we construct a polynomial system whose solutions correspond to all possible degree $2^m$ isogenies from  $E_{\start}$ to an elliptic curve with $j$-invariant $j(E_{\finish})$.

To make things more precise, we introduce the following definition which will be important also for the degree $3$ case.
\begin{definition}\label{def:P(d,m,E,j)set}
	Let $d\geq2$, $m\geq1$ be integers and let $E_{\start}$ and $E_{\finish}$ be two supersingular elliptic curves which are connected via an isogeny of degree $d^m$. We define the set
	\begin{equation}\label{def:PdmEj}
		\begin{aligned}
			P(d,m,E_{\start},j(E_{\finish})) = \{&E_{\start} \xrightarrow[]{\varphi_0} E_1 \xrightarrow[]{\varphi_1}  \cdots \xrightarrow[]{\varphi_{m-1}}  E_m \ | \text{ non-backtracking, } \\ &\ker\varphi_0\neq\langle(0,0)\rangle, \ \deg(\varphi_i) = d \ \ \forall i,\ j(E_m) = j(E_{\finish})\}/\sim 
		\end{aligned}   
	\end{equation}
	where $E_{\start} \xrightarrow[]{\varphi_0} E_1 \xrightarrow[]{\varphi_1}  \cdots \xrightarrow[]{\varphi_{m-1}}  E_m$ and $E_{\start} \xrightarrow[]{\varphi_0'} E_1' \xrightarrow[]{\varphi_1'}  \cdots \xrightarrow[]{\varphi_{m-1}'}  E_m'$ are equivalent under $\sim$ if and only if for $i=1,\ldots,m$ there exist isomorphisms $\psi_i:E_i\xrightarrow{\cong}E_i'$  such that the following diagram commutes
    \[
\begin{tikzcd}
E_{\start} \arrow[r, "\varphi_0"] \arrow[d, "\mathrm{id}"]
& E_1  \arrow[d, "\psi_1"] \arrow[r]
& \cdots \arrow[r]
& E_{m-1} \arrow[d, "\psi_{m-1}"] \arrow[r, "\varphi_{m-1}"] 
& E_{m} \arrow[d, "\psi_{m}"]
\\
E_{\start} \arrow[r, "\varphi'_0"] 
& E_1' \arrow[r]
& \cdots \arrow[r]
& E'_{m-1}\arrow[r, "\varphi'_{m-1}"] 
& E'_{m}
\end{tikzcd}
    \]
In particular, we have $j(E_i) = j(E_i')$ and $\mathrm{ker}(\varphi_i)=\mathrm{ker}(\varphi_i'\circ \psi_i)$ for all $i=1,\dots, m$.
\end{definition}

The set $P(d,m,E_{\start},j(E_{\finish}))$ parametrizes all possible non-backtracking paths of degree $d$ isogenies of length $m$ that start from $E_{\start}$ and end into a supersingular elliptic curve with the same $j$-invariant as $E_{\finish}$. 
We recall that by non-backtracking we mean that we do not allow two consecutive isogenies in the path to be dual to each other.
The additional condition that the kernel of the first isogeny is not generated by the point $(0,0)$ guarantees that when the curve $E_{\start}$ is in Montgomery form and $d=2$, then these paths can be obtained from the solutions of the following polynomial system build up from Renes polynomials.

\begin{tcolorbox}[
        title=\hypertarget{system2}{Polynomial system 2: Renes Formulas for $2$-isogenies, complete},
	colframe=black,         
	colback=white,          
	coltitle=black,         
	colbacktitle=white,     
	boxrule=0.5pt,
	arc=4pt,
	fonttitle=\bfseries,
	sharp corners,
	top=0mm,
	bottom=2mm,
	]
	$R_m = \mathbb{F}_{p^2}[\alpha_0,\ldots,\alpha_{m-1}]$\\
        $\mathcal{F}_{2\text{-Renes},m} = [\Psi_1(A_{\start},\alpha_{0}),\Psi_2(\alpha_0,\alpha_1),\ldots,\Psi_2(\alpha_{m-2},\alpha_{m-1}), \Psi_3(\alpha_{m-1},j(E_{\finish}))] \subseteq R_{m}$
		\begin{equation*}\label{eq_syst_Renes_notat}
				\begin{cases}
					&\Psi_1(A_{\start},\alpha_0) = 0 \\
					&\Psi_2(\alpha_0,\alpha_1) = 0 \\
					&\Psi_2(\alpha_1,\alpha_2) = 0 \\
					&\cdots \\
					&\Psi_2(\alpha_{m-2},\alpha_{m-1}) = 0 \\
					&\Psi_3(\alpha_{m-1},j(E_{\finish})) = 0.
				\end{cases}    
			\end{equation*}
\end{tcolorbox}

\begin{theorem}\label{thm:fund_thm_deg2}
	Let $E_{\start},E_{\finish}$ be two supersingular elliptic curves in Montgomery form over $\mathbb{F}_{p^2}$ (with the $y^2$-coefficient equal to 1). Suppose that $|E_{\start}(\mathbb{F}_{p^2})| = (p \pm 1)^2$.
    Then, given $m \in \mathbb{Z}_{\geq 1}$, there is a bijective map 
    \[\Theta_2:V_{\mathbb{F}_{p^2}}\left(\mathcal{F}_{2\text{-Renes},m}\right)\longrightarrow P(2,m,E_{\start},j(E_{\finish})),\]
    where $V_{\mathbb{F}_{p^2}}\left(\mathcal{F}_{2\text{-Renes},m}\right)$
    is the set of $\mathbb{F}_{p^2}$-rational solutions to Polynomial system~\hyperlink{system2}{2}.
\end{theorem}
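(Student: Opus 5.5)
The plan is to define $\Theta_2$ explicitly step by step along the chain, check that it lands in $P(2,m,E_{\start},j(E_{\finish}))$, and then construct an inverse. Given a solution $(\overline{\alpha}_0,\dots,\overline{\alpha}_{m-1})\in\mathbb{F}_{p^2}^m$, set $E_0=E_{\start}$ and, for $i=1,\dots,m$, let $E_i\colon y^2=x^3+(2-4\overline{\alpha}_{i-1}^2)x^2+x$.

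By Proposition~\ref{prop:Renespolynomialsdeg2}(1), $\Psi_1(A_{\start},\overline{\alpha}_0)=0$ means that $(\overline{\alpha}_0,0)$ is a $2$-torsion point of $E_{\start}$ different from $(0,0)$. Lemma~\ref{lemma:Renesformulas} then gives a $2$-isogeny $\varphi_0\colon E_{\start}\to E_1$ with kernel $\langle(\overline{\alpha}_0,0)\rangle$. Inductively, the equation $\Psi_2(\overline{\alpha}_{i-1},\overline{\alpha}_i)=0$ is, by the computation \eqref{eq:Renform}, equivalent to $\overline{\alpha}_i$ being a root of $x^2+(2-4\overline{\alpha}_{i-1}^2)x+1$. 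Hence $(\overline{\alpha}_i,0)$ is a $2$-torsion point of $E_i$ different from $(0,0)$, and we take $\varphi_i$ to be the Renes isogeny $E_i\to E_{i+1}$ with kernel $\langle(\overline{\alpha}_i,0)\rangle$. Non-backtracking follows from \cite[Corollary 1]{Renes}: the dual of $\varphi_{i-1}$ has kernel $\langle(0,0)\rangle$ on $E_i$, and $\ker\varphi_i\neq\langle(0,0)\rangle$. Finally, $\Psi_3(\overline{\alpha}_{m-1},j(E_{\finish}))=0$ together with Proposition~\ref{prop:propertiesofPsi3} gives $j(E_m)=j(E_{\finish})$. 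All curves involved are defined over $\mathbb{F}_{p^2}$ and have $(p\pm1)^2$ points by Tate's theorem, so Lemma~\ref{lem:2torsion-rationality} applies at every step. We then define $\Theta_2$ to send the solution to the class of this chain.

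For surjectivity, take a class in $P(2,m,E_{\start},j(E_{\finish}))$. The key normalization is that each curve $E_i$, $i\ge1$, can be replaced by an isomorphic curve in Montgomery form such that $\varphi_{i-1}$ is exactly the Renes isogeny. To do this, first put $E_{\start}$ in simplified Montgomery form and set $\overline{\alpha}_0$ to be the $x$-coordinate of the generator of $\ker\varphi_0$, which is allowed since $\ker\varphi_0\neq\langle(0,0)\rangle$. Since isogenies are determined by their kernels up to isomorphism, we get $\psi_1\colon E_1\cong E_1^{\mathrm{Renes}}$ commuting with $\varphi_0$. On $E_1^{\mathrm{Renes}}$, the dual of $\varphi_0$ has kernel $\langle(0,0)\rangle$, so non-backtracking forces the transported kernel of $\varphi_1$ to be $\langle(\gamma,0)\rangle$ with $\gamma$ a root of $\Psi_2(\overline{\alpha}_0,Y)$. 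Iterating produces $\overline{\alpha}_1,\dots,\overline{\alpha}_{m-1}$. The last equation holds because $\Psi_3(X,j)=-\tfrac1{16}f(2-4X^2)$ and $j(E_m)=j(E_{\finish})$. By construction the resulting chain is $\sim$-equivalent to the given one.

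For injectivity, suppose two solutions give equivalent chains. Then $\ker\varphi_0=\ker\varphi_0'$, which forces $\overline{\alpha}_0=\overline{\alpha}_0'$ because distinct roots give distinct $2$-torsion points. By induction, the isomorphism $\psi_i$ identifies the same curve $E_i$ with itself and carries kernels to kernels, so $\overline{\alpha}_i=\overline{\alpha}_i'$. The main obstacle is exactly here: $\psi_i$ could be a nontrivial automorphism of $E_i$ that permutes $2$-torsion points. This cannot happen for $j\neq0,1728$, where $\mathrm{Aut}=\{\pm1\}$ fixes $E[2]$ pointwise, and the hypothesis $|E_{\start}(\mathbb{F}_{p^2})|=(p\pm1)^2$ together with Remark~\ref{rem:numofratpointsperjinv} is what I would use to rule out the remaining $j$-invariants. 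I would also have to check that an isomorphism of Montgomery curves with the same $A$ that commutes with the Renes isogenies is $\pm1$, so that it preserves $x$-coordinates of kernel generators.
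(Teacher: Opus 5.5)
Your forward construction and your surjectivity argument are essentially the paper's maps $\Theta_2$ and $\Lambda_2$, and they are fine. The gap is in how you propose to close injectivity. Remark~\ref{rem:numofratpointsperjinv} only gives the implication $j(E)\notin\{0,1728\}\Rightarrow |E(\mathbb{F}_{p^2})|=(p\pm1)^2$; the converse is false. For $p\equiv 3 \pmod 4$, the Montgomery curve $y^2=x^3+x$ (with $A=0$, $j=1728$) is supersingular with $p+1$ points over $\mathbb{F}_p$, hence $(p+1)^2$ points over $\mathbb{F}_{p^2}$. Moreover, the hypothesis concerns only $E_{\start}$, while the automorphisms you worry about live on the intermediate curves $E_i$. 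By Tate's theorem these have the same point count as $E_{\start}$, and they may well have $j=0$ or $1728$. So the hypothesis cannot force $\mathrm{Aut}(E_i)=\{\pm1\}$. An argument built on that would prove the theorem only under an extra assumption that is not in the statement.

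The obstacle is not real, however, because the diagram in the definition of $\sim$ pins down every $\psi_i$. Argue by induction. From $\psi_1\circ\varphi_0=\varphi_0'$ you get $\ker\varphi_0=\ker\varphi_0'$, hence $\overline{\alpha}_0=\overline{\alpha}_0'$. Then $E_1=E_1'$ and $\varphi_0=\varphi_0'$ are literally the same Renes map, so $\psi_1\circ\varphi_0=\varphi_0$. Since $\varphi_0$ is surjective on $\overline{\mathbb{F}}_p$-points, this gives $\psi_1=\mathrm{id}$, not merely $\pm1$, because $-\varphi_0\neq\varphi_0$. Next, $\psi_2\circ\varphi_1=\varphi_1'\circ\psi_1=\varphi_1'$ gives $\ker\varphi_1=\ker\varphi_1'$, hence $\overline{\alpha}_1=\overline{\alpha}_1'$ and $\psi_2=\mathrm{id}$, and so on up to $\overline{\alpha}_{m-1}$. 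No restriction on $j$-invariants is needed; your closing idea of using that $\psi_i$ commutes with the Renes isogenies was the right one, and it already finishes the job. With this fix your write-up would be more complete than the paper's. The paper does not spell this step out; it simply asserts that $\Lambda_2$ is inverse to $\Theta_2$, which implicitly requires $\Lambda_2$ to be well defined on equivalence classes.
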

\begin{proof}
	First, let $(\alpha_0,\ldots,\alpha_{m-1}) \in {(\mathbb{F}_{p^2})}^{m}$ be a solution of Polynomial system \hyperlink{system2}{2}. Then, by Proposition~\ref{prop:Renespolynomialsdeg2},  the simplified Montgomery form of $E_{\start}$ is $y^2 = x(x - \alpha_0)(x - \alpha_0^{-1})$ and we have a degree~$2$ isogeny $\varphi_0:E_{\start}\rightarrow E_1$, with $E_1:y^2 = x(x- \alpha_1)(x- \alpha_1^{-1})$ and $\ker\varphi_0\neq\langle(0,0)\rangle$. Applying Proposition~\ref{prop:Renespolynomialsdeg2} to $E_1$ and then recursively, we obtain a chain $E_{\start} \overset{\varphi_0}{\longrightarrow} E_1 \overset{\varphi_1}{\longrightarrow} \cdots \overset{\varphi_{m-1}}{\longrightarrow} E_m$, where each curve in the path has simplified Montgomery form $y^2 = x(x - \alpha_i)(x - \alpha_i^{-1})$, $\deg\varphi_i=2$, and $\ker\varphi_i\neq\langle(0,0)\rangle$. The last condition guarantees that the path is non-backtracking. Also, by Proposition~\ref{prop:propertiesofPsi3}, we obtain that $j(E_m) = j(E_{\finish})$. Thus, we have constructed an element of the set $P(2,m,E_{\start},j(E_{\finish}))$. This construction defines the desired map $\Theta_2$.
    
	Now, we construct an inverse map $\Lambda_2$ in the opposite direction.
	Suppose that we have a path $\varphi$ of $m$ degree $2$ non-backtracking isogenies between $E_{\start}$ and $E_{m}$ with $j(E_{\finish})=j(E_m)$ and $\ker\varphi_0\neq\langle(0,0)\rangle$:
    $$\varphi: E_{\start} \xrightarrow[]{\varphi_0} E_1 \xrightarrow[]{\varphi_1} \cdots \xrightarrow[]{\varphi_{m-1}} E_m.$$
We select an appropriate representative in the equivalence class of $\varphi$ in $P(2,m,E_{\start},j(E_{\finish}))$ and we use this representative to construct an element $(\alpha_0,\dots,\alpha_{m-1})\in V_{\mathbb{F}_{p^2}}\left(\mathcal{F}_{2\text{-Renes},m}\right)$.    
    Since $E_{\start}$ is in Montgomery form, then $\ker(\varphi_0)$ is generated by a point $(\alpha_0,0)$, where $\alpha_0$ is one of the two distinct (since $E_{\start}$ is an elliptic curve) solutions of the equation $\Psi_1(A_{\start},Y) = Y^2 + A_{\start}Y + 1 = 0$. Notice that $\alpha_0\in\mathbb{F}_{p^2}$ by Lemma~\ref{lem:2torsion-rationality}. Thanks to Renes formulas (Lemma~\ref{lemma:Renesformulas}), each $2$-torsion point $(\alpha,0)$ defines a distinct $2$-isogeny from $E_{\start}$ to an elliptic curve $E(\alpha)$ in Montgomery form defined by the coefficient $A = 2 - 4\alpha^2$. In particular, the codomain curve $E(\alpha_0)$ is isomorphic over $\overline{\mathbb{F}}_p$ to the curve $E_1$ in the fixed path $\varphi$, thus $j(E_1) = j(E(\alpha_0))$.
    Therefore, by composing with an isomorphism $\psi_1:E_1\rightarrow E(\alpha_0)$ we may replace $E_1$ by $E(\alpha_0)$, which is in Montgomery form, without changing the equivalence class of the path $\varphi$. Since the path is non-backtracking, the kernel of $\varphi_1$ is generated by a point $(\alpha_1,0)$, where by Proposition~\ref{prop:Renespolynomialsdeg2} $\alpha_1$ is a solution to the equation $\Psi_2(\alpha_0,Y) = Y^2 + (2 - 4\alpha_0^2)Y + 1 = 0$. 
    We observe that $\alpha_1\in\mathbb{F}_{p^2}$ since $|E(\alpha_0)(\mathbb{F}_{p^2})| = (p \pm 1)^2$ by Tate's theorem.
    As before, we may compose with an isomorphism $\psi_2:E_2\rightarrow E(\alpha_1)$ and replace $E_2$ with a curve in Montgomery form $E(\alpha_1)$ without changing the representative class of the path $\varphi$. Proceeding in this way, we obtain elements $\alpha_0,\ldots,\alpha_{m-1}\in\mathbb{F}_{p^2}$ which satisfy the equations involving the polynomials $\Psi_1$ and $\Psi_2$ of Polynomial system~\hyperlink{system2}{2}. Notice that, since the path is non-backtracking, at each step we do not have to consider the isogeny with kernel generated by $(0,0)$. At the last step, from the condition $j(E_{\finish}) = j(E(\alpha_{m-1}))$, we also obtain that $\alpha_{m-1}$ is a solution to the equation $\Psi_3(X,j(E_{\finish})) = 0$ by Proposition~\ref{prop:propertiesofPsi3}. Therefore, $(\alpha_0,\dots,\alpha_{m-1})$ is a solution of Polynomial system~\hyperlink{system2}{2}. In particular, this construction defines a map $\Lambda_2:P(2,m,E_{\start},j(E_{\finish}))\longrightarrow V_{\mathbb{F}_{p^2}}\left(\mathcal{F}_{2\text{-Renes},m}\right)$ which is the inverse of the map $\Theta_2$ defined above, hence $\Theta_2$ is bijective.
\end{proof}

\begin{remark}[Comparison with modular polynomials]
We point out some distinctions between the approach with Renes polynomials and with modular polynomials.
\begin{enumerate}
	\item Once we have a solution of Polynomial system~\hyperlink{system2}{2}, we are able to directly recover the explicit curves (in simplified Montgomery form) that are in the path of the $2^m$-isogeny, while this is not immediate with modular polynomials. Moreover, we can also give a precise description of the $2$-isogenies in the path. Those formulas can be found in \cite[Proposition 2]{Renes}.
	\item  It is \underline{not} true in this setting that there exists a $2$-isogeny between two elliptic curves in simplified Montgomery form with parameters $\alpha_1$ and $\alpha_2$ if and only if $\Psi_2(\alpha_1,\alpha_2) = 0$. This is due to the fact that Renes polynomials do not take into account the different representations of the curves, which may have different Montgomery forms isomorphic to each other. Moreover, Renes polynomials do not consider the backtracking isogeny, associated to the $2$-torsion point $(0,0)$. Indeed,  $\Psi_2(X,Y)$ is not a symmetric polynomial.
\end{enumerate}  
\end{remark}

\subsection{The (compact) modeling}
From a computational point of view, the presence of the degree $12$ polynomial $\Psi_3$ in Polynomial System~\hyperlink{system2}{2} makes it harder to solve. To fix this issue, we consider $6$ systems where the last equation is replaced by $(2 - 4\alpha_{m-1}^2) - A = 0$, where $A$ is one of the $6$ possible roots of the polynomial that codifies all the possible $A$'s of a curve in Montgomery form with fixed $j$-invariant $j(E_{\finish})$. Explicitly, we precompute those $A$'s as the roots of $256(X^2 - 3)^3 - j(E_{\finish})(X^2 - 4) \in \mathbb{F}_{p^2}[X]$ and we plug them into the polynomial
$$\Psi_4 = -4X^2 - Y + 2 \in \mathbb{Z}[X].$$
We obtain $6$ equations $\Psi_4(\alpha_{m-1},A) = 0$, which give rise to $6$ new polynomial systems of the following form.

\begin{tcolorbox}[
        title=\hypertarget{system3}{Polynomial system 3: Renes Formulas for $2$-isogenies, compact},
	colframe=black,         
	colback=white,          
	coltitle=black,         
	colbacktitle=white,     
	boxrule=0.5pt,
	arc=4pt,
	fonttitle=\bfseries,
	sharp corners,
	top=0mm,
	bottom=2mm,
	]
	$R_m = \mathbb{F}_{p^2}[\alpha_0,\ldots,\alpha_{m-1}]$\\
        $\mathcal{R}_{2,m} = [\Psi_1(A_{\start},\alpha_{0}),\Psi_2(\alpha_0,\alpha_1),\ldots,\Psi_2(\alpha_{m-2},\alpha_{m-1}), \Psi_4(\alpha_{m-1},A)] \subseteq R_{m}$
		\begin{equation*}\label{eq_syst_Renes_notat}
				\begin{cases}
					&\Psi_1(A_{\start},\alpha_0) = 0 \\
					&\Psi_2(\alpha_0,\alpha_1) = 0 \\
					&\Psi_2(\alpha_1,\alpha_2) = 0 \\
					&\cdots \\
					&\Psi_2(\alpha_{m-2},\alpha_{m-1}) = 0 \\
					&\Psi_4(\alpha_{m-1},A) = 0.
				\end{cases}    
			\end{equation*}
\end{tcolorbox}
Now, the degree of the last polynomial is $2$ instead of $12$. Moreover, only one of these $6$ systems will have a solution, but the computations can be parallelized and, once a solution is found, all  other computations can be aborted.
Notice that, since we already have $A_{\finish}$ and we know that $-A_{\finish}$ is also an admissible $A$, to find the other $A$'s we can now find the roots of a polynomial of degree $4$. This means that the pre-computation of the $A$'s is computationally irrelevant with respect to the computation of a Gr\"obner basis.

For these reasons, we will focus on Polynomial System \hyperlink{system3}{3} rather than  Polynomial System~\hyperlink{system2}{2}. 
We postpone to Section~\ref{section:algebraicanalysis} and Section~\ref{section:experiments} an analysis of its algebraic properties and a report on some computational experiments.

\section{Algebraic Modeling with Renes Polynomials in degree $3$}\label{section:Renes3}
In this section, we consider chain of isogenies of degree $3$. Given a supersingular elliptic curve $E$ we have four isogenies of degree $3$ with domain the curve $E$, and the kernel of a $3$-isogeny is a subgroup of $E[3]$ of order $3$. In \cite{Renes}, Renes gives a description for degree $3$-isogenies similar to the degree $2$ in the previous subsection. We use it to construct polynomials and build an algebraic modeling for SIP.

\subsection{Triangular curves and Renes formulas}
A supersingular elliptic curve $E$ is in \emph{triangular form} if it is given by an equation 
\[
E \colon \ y^2+axy+by=x^3,
\]
for some $a,b\in \mathbb{F}_{p^2}$. If $b$ is a cube in $\mathbb{F}_{p^2}$, then writing $b=\beta^3$ for some $\beta\in\mathbb{F}_{p^2}$, we obtain $b=1$ through the transformation $(x,y)\mapsto(x/\beta^2,y/\beta^3)$. For simplicity, in the following, we will always assume that the elliptic curve admits a triangular form with $b=1$. We recall that in this case the condition of being non-singular is equivalent to $a^3\neq27$.

As for Montgomery curves, not every supersingular elliptic curve can be put into triangular form. However, if the curve $E/\mathbb{F}_{p^2}$ contains a rational point of order $3$, then it can be written in triangular form (see \cite[Theorem 5.2]{Bernstein-C-K-L}). We point out that when an elliptic curve is in triangular form, the point $(0,0)$ has order $3$. The following lemma, whose proof is analogous to the proof of Lemma~\ref{lem:2torsion-rationality}, gives a sufficient criterion for the rationality of points of order $3$.

\begin{lemma}\label{lem:3torsion-rationality}
Let $E$ be a supersingular elliptic curve defined over $\mathbb{F}_{p^2}$.  Assume that $|E(\mathbb{F}_{p^2})| = (p - 1)^2$ if $p\equiv 1\mod 3$ and  $|E(\mathbb{F}_{p^2})| = (p + 1)^2$ if $p\equiv 2\mod 3$. Then $E[3] \subseteq E(\mathbb{F}_{p^2})$, i.e., the $3$-torsion points of $E$ are $\mathbb{F}_{p^2}$-rational.
\end{lemma}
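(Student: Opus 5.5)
The plan mirrors the proof of Lemma~\ref{lem:2torsion-rationality}. By \cite[Lemma 4.8 (ii)]{Schoof}, a supersingular curve with $|E(\mathbb{F}_{p^2})|=(p\pm1)^2$ has group structure $E(\mathbb{F}_{p^2})\cong \mathbb{Z}/(p\pm1)\mathbb{Z}\times\mathbb{Z}/(p\pm1)\mathbb{Z}$, with the same sign. I will invoke this for both sign choices allowed by the hypothesis. This is the only nontrivial input, and it already encodes the fact that Frobenius acts as the scalar $\mp p$.

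Next comes the arithmetic check, which is where the congruence condition on $p$ enters. If $p\equiv1\pmod 3$, then $3\mid p-1$. The hypothesis gives $E(\mathbb{F}_{p^2})\cong(\mathbb{Z}/(p-1)\mathbb{Z})^2$, which therefore contains the subgroup $(\mathbb{Z}/3\mathbb{Z})^2$ of elements killed by $3$. If $p\equiv2\pmod 3$, then $3\mid p+1$, and $(\mathbb{Z}/(p+1)\mathbb{Z})^2$ likewise contains a subgroup $(\mathbb{Z}/3\mathbb{Z})^2$. Note that $p>3$, so these are the only two cases.

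Finally, I compare with the full $3$-torsion. Since $3\neq p$, we have $E[3]\cong(\mathbb{Z}/3\mathbb{Z})^2$, which has $9$ elements. The $3$-torsion of $E(\mathbb{F}_{p^2})$ is a subgroup of $E[3]$, and we have just shown it has order $9$. Hence it equals $E[3]$, so $E[3]\subseteq E(\mathbb{F}_{p^2})$.

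The only place needing care is the first step: checking that Schoof's lemma gives a product of two cyclic groups of equal order $p\pm1$, and not merely a group of order $(p\pm1)^2$. Without the split structure, a cyclic group of that order would contain only $3$ points of order dividing $3$.

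If one prefers to avoid quoting Schoof, I would argue directly instead. When $t=\pm2p$, the $p^2$-Frobenius $\pi$ satisfies $(\pi\mp p)^2=0$ in $\mathrm{End}(E)$, which is torsion-free, so $\pi=\pm p$. Then $\ker(\pi-1)=E[p\mp1]\cong(\mathbb{Z}/(p\mp1)\mathbb{Z})^2$, and the same divisibility argument applies.
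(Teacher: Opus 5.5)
Your proof is correct and follows the paper's route. The paper proves this lemma exactly as it proves Lemma~\ref{lem:2torsion-rationality}: it quotes \cite[Lemma 4.8 (ii)]{Schoof} for the split structure $E(\mathbb{F}_{p^2})\cong(\mathbb{Z}/(p\pm1)\mathbb{Z})^2$, then uses that $3$ divides $p-1$ (resp.\ $p+1$) when $p\equiv 1\pmod 3$ (resp.\ $p\equiv 2\pmod 3$); your explicit comparison with the $9$ points of $E[3]$ makes the final step cleaner than the paper's. One small fix in your optional Schoof-free argument: passing from $(\pi\mp p)^2=0$ to $\pi=\pm p$ requires that $\mathrm{End}(E)$ has no zero divisors (degrees multiply), which is a stronger property than being torsion-free as a $\mathbb{Z}$-module.
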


\begin{remark}
Notice that, if we are in a case for a curve $E$ as in Lemma~\ref{lem:3torsion-rationality} where $p\equiv 1\mod 3$ but $|E(\mathbb{F}_{p^2})| = (p+1)^2$ (resp. $p\equiv 2\mod 3$ but $|E(\mathbb{F}_{p^2})| = (p-1)^2$), we can pass to a twist $E_{qt}$ of $E$ such that $|E_{qt}(\mathbb{F}_{p^2})| = (p-1)^2$ (resp. $|E_{qt}(\mathbb{F}_{p^2})| = (p+1)^2$) and work with $E_{qt}$. Therefore, up to a twist, we can always apply Lemma \ref{lem:3torsion-rationality}.
\end{remark}

As for the degree $2$ case, if $E$ is a supersingular elliptic curve in triangular form and $\varphi:E\rightarrow E'$ is a degree $3$ isogeny with $\ker\varphi\neq\langle(0,0)\rangle$, then also $E'$ can be put in triangular form and the $\mathbb{F}_{p^2}$-rationality of the $3$-torsion points is preserved. More precisely, the $3$-isogenies with domain $E$ can be explicitly described by \emph{Renes formulas} as follows (cf. \cite[Proposition~4, Corollary 3]{Renes}).

\begin{lemma}[Renes formulas, degree 3]\label{lemma:Renes3}
Given a supersingular elliptic curve $E\colon y^2+axy+y=x^3$ in triangular form and $(x_P,y_P)\in E(\mathbb{F}_{p^2})$ a $3$-torsion point distinct from $(0,0)$, then we have a $3$-isogeny $\varphi:E\rightarrow E'$, where $E'\colon y^2+Axy+y=x^3$ with $A=-3(2+ax_P)$. Moreover, the dual isogeny $\hat{\varphi}:E'\rightarrow E$ is the one with kernel $\langle(0,0)\rangle$.
\end{lemma}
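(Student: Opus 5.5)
The plan is to derive the statement directly from Vélu's formulas for a kernel of order $3$, and then move the image of $(0,0)$ to the origin so that the codomain is again in triangular form.

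\textbf{Step 1: setup and the $3$-division polynomial.} Write $E\colon y^2+axy+y=x^3$ as a general Weierstrass equation with $a_1=a$, $a_3=1$ and $a_2=a_4=a_6=0$. The associated quantities are
\[
b_2=a^2,\quad b_4=a,\quad b_6=1,\quad b_8=0.
\]
The $3$-division polynomial is then
\[
\psi_3=3x^4+b_2x^3+3b_4x^2+3b_6x+b_8=x\,(3x^3+a^2x^2+3ax+3).
\]
The factor $x$ accounts for $\pm(0,0)$. Hence every $3$-torsion point $P=(x_P,y_P)\notin\langle(0,0)\rangle$ satisfies
\[
3x_P^3+a^2x_P^2+3ax_P+3=0.
\]
This relation is the key simplification tool in the later steps.

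\textbf{Step 2: Vélu's formulas.} The kernel is $\{O,P,-P\}$, so only one representative contributes. Set
\[
t=6x_P^2+b_2x_P+b_4,\qquad w=u_P+x_Pt,\qquad u_P=4x_P^3+b_2x_P^2+2b_4x_P+b_6 .
\]
Vélu's formulas give the codomain
\[
E'\colon y^2+axy+y=x^3-5t\,x-(b_2t+7w).
\]
They also give an explicit rational map $\varphi$, defined over $\mathbb{F}_{p^2}$ since $P$ is rational.

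\textbf{Step 3: transport $(0,0)$ and normalize.} Since $(0,0)\notin\langle P\rangle$, we have $E[3]=\langle P,(0,0)\rangle$. Therefore
\[
\ker\hat\varphi=\varphi(E[3])=\langle Q\rangle,\qquad Q=\varphi(0,0).
\]
The point $Q$ is computable from Vélu's map as an $\mathbb{F}_{p^2}$-rational point of order $3$. I then apply the coordinate change $x\mapsto x+x_Q$, $y\mapsto y+sx+y_Q$, where $s$ is the slope of the inflectional tangent at $Q$. This puts the curve in the shape $y^2+a'xy+b'y=x^3$. Finally I rescale by $(x,y)\mapsto(x/\beta^2,y/\beta^3)$ to get $b'=1$.

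After this normalization $Q$ becomes $(0,0)$, so the dual isogeny is the one with kernel $\langle(0,0)\rangle$. This gives the "moreover" part at once. The remaining task is to check that the resulting coefficient equals $A=-3(2+ax_P)$. It may only match up to a cube root of unity coming from the choice of $\beta$, and I would choose the scaling so that it matches exactly.

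\textbf{Main obstacle.} The hardest step is the algebraic simplification of $a'$ and $b'$ after the change of variables. The raw expressions are rational functions in $a$, $x_P$ and $y_P$. I would first eliminate $y_P$ using the curve equation. I would then reduce modulo the cubic from Step 1, which is what should collapse $b'$ to a perfect cube and $a'/\beta$ to $-3(2+ax_P)$.

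If the direct route becomes unwieldy, I would try the alternative of using the known $j$-invariant of a triangular curve,
\[
j=\frac{a^3(a^3-24)^3}{a^3-27},
\]
to check that $E'$ and $y^2+Axy+y=x^3$ are isomorphic. That check alone is weaker, so it would have to be combined with a direct verification that the isomorphism sends $Q$ to $(0,0)$. This verification is a finite symbolic computation, which I would carry out in SageMath, in the same spirit as the paper's computation of $\Psi_3$.
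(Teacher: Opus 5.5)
Your proposal is correct, but it takes a different route from the paper. The paper gives no argument for this lemma and simply refers to Renes (Proposition~4 and Corollary~3). You instead derive it from first principles in three steps: Vélu's formulas for $\ker\varphi=\langle P\rangle$; the structural identity $\ker\hat\varphi=\varphi(E[3])=\langle\varphi(0,0)\rangle$, which settles the ``moreover'' part without any computation; and a normalization moving $Q=\varphi(0,0)$ to the origin.

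The symbolic step you single out as the main obstacle does close up as you predict. Eliminate $y_P$ with the curve equation and reduce modulo $3x_P^3+a^2x_P^2+3ax_P+3$. One then finds $x_Q=-3x_P-\tfrac{a^2}{3}$ and $y_Q=(ax_P+1)(x_P^3+1)/x_P^3$, hence $b'=2y_Q+ax_Q+1=-1/x_P^3$. The inflectional tangent slope reduces to $s=-3x_P^2-a^2x_P-2a$. With $\beta=-1/x_P$ this gives exactly $a'/\beta=-x_P(a+2s)=-3(2+ax_P)$. In particular, the normalizing isomorphism is defined over $\mathbb{F}_{p^2}$ (note $x_P\neq 0$ since the cubic has constant term $3$), and no cube-root-of-unity adjustment is needed.

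Your route gives a self-contained verification from standard tools, including the rationality of the triangular model of the codomain. The citation gives brevity and Renes's compact explicit formula for the isogeny map itself.

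One small point: the hypothesis ``distinct from $(0,0)$'' has to be read as $P\notin\langle(0,0)\rangle$, i.e.\ also $P\neq(0,-1)=-(0,0)$. This is exactly how your Step~1 uses it.
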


\subsection{Renes polynomials}

We are going to use Renes formulas to construct polynomials which we will use in our modeling. First, we define them and then we explain their properties.

\begin{definition}\label{def:Renesdegree3}
    The \emph{Renes polynomials of degree $3$} are the following polynomials in $\mathbb{Z}[X,Y]$:
    \[
    \begin{split}
     \Gamma_1(X,Y)&=X^3Y^2 + 3X^3Y + 9X^3 - Y^3 - 18Y^2 - 108Y - 216;\\
    \Gamma_2(X,Y)&= X^{12} - 72X^9 + 1728X^6 - X^3Y - 13824X^3 + 27Y.
    \end{split}
    \]
\end{definition}

\begin{proposition}\label{prop:properties of Gamma1}
Let $E$ be a supersingular elliptic curve over $\mathbb{F}_{p^2}$ written in triangular form as $y^2+axy+y=x^3$. Assume that $|E(\mathbb{F}_{p^2})| = (p - 1)^2$ if $p\equiv 1\mod 3$ and  $|E(\mathbb{F}_{p^2})| = (p + 1)^2$ if $p\equiv 2\mod 3$.
For $i\in\{1,2,3\}$, let  $\varphi_i \colon E \rightarrow E_i$ be the three degree $3$ isogenies (defined over $\mathbb{F}_{p^2}$) with kernel $\neq\langle(0,0)\rangle$.
	Let $\gamma_1,\gamma_2,\gamma_3\in\mathbb{F}_{p^2}$ be the roots of $\Gamma_1(a,Y)$. Then, (up to relabeling) the triangular form of $E_i$ is $y^2+\gamma_i xy+y=x^3$.
\end{proposition}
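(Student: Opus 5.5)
Following the degree $2$ case (Proposition~\ref{prop:Renespolynomialsdeg2}), we will show that the triangular coefficients of the three codomains are exactly the roots of $\Gamma_1(a,Y)$. The idea is to describe them via Renes formulas (Lemma~\ref{lemma:Renes3}) in terms of the $3$-torsion points, and then eliminate those points.

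First we identify the kernels. By Lemma~\ref{lem:3torsion-rationality}, $E[3]\subseteq E(\mathbb{F}_{p^2})$, so the four subgroups of order $3$ are all rational. One of them is $\langle(0,0)\rangle$, and the other three are generated by rational points $P_i=(x_{P_i},y_{P_i})\neq(0,0)$. By Lemma~\ref{lemma:Renes3}, $\varphi_i$ has codomain $E_i\colon y^2+\gamma_i xy+y=x^3$ with $\gamma_i=-3(2+ax_{P_i})$. The value $\gamma_i$ does not depend on the chosen generator, since $P$ and $-P$ have the same $x$-coordinate.

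Next we locate the $x$-coordinates of the $3$-torsion points. A point $P$ has order $3$ iff it is a flex, i.e.\ a zero of the $3$-division polynomial. For $y^2+axy+y=x^3$ the Weierstrass invariants are $a_1=a$, $a_3=1$, $a_2=a_4=a_6=0$. Hence $b_2=a^2$, $b_4=a$, $b_6=1$, $b_8=0$, and
\[
\psi_3(x)=3x^4+a^2x^3+3ax^2+3x=x\,(3x^3+a^2x^2+3ax+3).
\]
The factor $x$ corresponds to $(0,0)$. Therefore the $x_{P_i}$ are the three roots of $h(x)=3x^3+a^2x^2+3ax+3$, and they are distinct because $E$ is nonsingular.

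Then we eliminate. Substituting $x=-(Y+6)/(3a)$, which is valid when $a\neq0$, and clearing denominators, $h$ becomes a cubic in $Y$ whose roots are $\gamma_1,\gamma_2,\gamma_3$. We then check that this cubic is proportional to $\Gamma_1(a,Y)$. Concretely, compute $\mathrm{Res}_x(h(x),\,Y+6+3ax)$, or verify directly that
\[
\Gamma_1\bigl(a,-3(2+ax)\bigr)=c\cdot a^{?}\,h(x)
\]
for a nonzero constant $c$. Both sides have degree $3$ in $x$, so this comparison is finite bookkeeping. Since $\Gamma_1(a,Y)$ has degree $3$ in $Y$ (leading coefficient $-1$), its roots are exactly the $\gamma_i$ with multiplicity. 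The $\gamma_i$ lie in $\mathbb{F}_{p^2}$ because the $x_{P_i}$ do.

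The main obstacle is the degenerate case $a=0$. There the substitution breaks down and every $\gamma_i=-6$. On the other side, $\Gamma_1(0,Y)=-(Y+6)^3$, which has the triple root $-6$. So the statement still holds, but it has to be checked separately. The curve $y^2+y=x^3$ has $j=0$, and the hypothesis on $|E(\mathbb{F}_{p^2})|$ (via Remark~\ref{rem:numofratpointsperjinv}) may already exclude it, but I would treat it explicitly.

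A smaller point to get right is that the relabeling is a bijection: the three kernels give three distinct isogenies, but their codomain coefficients could coincide. The statement only requires the multiset equality of roots, and the resultant argument above delivers exactly that.
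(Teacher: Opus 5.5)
Your proposal is correct and follows essentially the same route as the paper: Renes' formula $\gamma_i=-3(2+ax_{P_i})$, then the $3$-division polynomial $\psi_3$ with the factor $x$ removed, then elimination of $x$ against $Y+6+3ax$ to produce $\Gamma_1$. Your undetermined constant works out as $\Gamma_1\bigl(a,-3(2+ax)\bigr)=9a^3\,(3x^3+a^2x^2+3ax+3)$, and your separate check of $a=0$ is a worthwhile addition, since $y^2+y=x^3$ does satisfy the point-count hypothesis when $p\equiv 2 \bmod 3$.
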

\begin{proof}
The assumption on the number of rational points of $E$ guarantees that the $3$-torsion points of $E$ are rational by Lemma~\ref{lem:3torsion-rationality}. Then, by Lemma~\ref{lemma:Renes3} each $3$-torsion point $(x_i,y_i)\neq(0,0)$ gives a degree $3$ isogeny to an elliptic curve in triangular form $E'\colon y^2+\gamma_ixy+y=x^3$ with $\gamma_i=-3(2+ax_i)$. Thus, it is enough to prove that $\gamma_1,\gamma_2,\gamma_3$ are precisely the three roots of $\Gamma_1(a,Y)$.

First, we recall that the $x$-coordinates of the $3$-torsion points of $E$ are the roots of the $3$-division polynomial
\[
\psi_3(a,Z) = 3Z^4+a^2Z^3+3aZ^2+3Z.
\]
Notice, that $\psi_3(a,0)=0$, since $(0,0)$ is always a $3$-torsion point. Since we are not interested in the isogeny with kernel $\langle(0,0)\rangle$, we divide by $Z$ and consider the polynomial with integer coefficients
\[
\widetilde{\psi_3}(X,Z)=3Z^3+Z^2X^2+3XZ+3.
\]
The roots of $\widetilde{\psi_3}(a,Z)$ are the $x$-coordinates of the $3$-torsion points of $E$ distinct from $(0,0)$.
We also introduce the polynomial 
\[
\theta(X,Y,Z)=3XZ + Y + 6\in\mathbb{Z}[X,Y,Z]
\]
with the property that for any root $z$ of $\widetilde{\psi_3}(a,Z)$ the root of $\theta(a,Y,z)$ is one of $\gamma_1,\gamma_2,\gamma_3$ by Lemma~\ref{lemma:Renes3}.
Now, we consider the elimination ideal
$$I=(\widetilde{\psi_3}(X,Z),\theta(X,Y,Z)) \cap \mathbb{Q}[X,Y].$$
A computation shows that the ideal $I$ is principally generated by the polynomial 
$\Gamma_1(X,Y) = X^3Y^2 + 3X^3Y + 9X^3 - Y^3 - 18Y^2 - 108Y - 216$. Thus, the roots of $\Gamma_1(a,Y)$ are precisely $\gamma_1,\gamma_2,\gamma_3$ as required.
\end{proof}

  \begin{proposition}\label{prop:propertiesofGamma2}
        Let $E$ be a supersingular elliptic curve over $\mathbb{F}_{p^2}$ in triangular form.
           Let $\delta\in\mathbb{F}_{p^2}$ be a root of $\Gamma_2(X,j(E))$, then the elliptic curve $y^2+\delta xy+y=x^3$ has $j$-invariant $j(E)$.
       \end{proposition}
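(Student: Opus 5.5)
The plan is to compute the $j$-invariant of the triangular curve $E_a\colon y^2+axy+y=x^3$ as a rational function of $a$, and then show that $\Gamma_2(X,j)=0$ is exactly this relation with denominators cleared. This mirrors the proof of Proposition~\ref{prop:propertiesofPsi3}, where $\Psi_3$ encoded $j(E)=256(A^2-3)^3/(A^2-4)$.

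First, compute the Weierstrass invariants of the general Weierstrass equation $y^2+a_1xy+a_3y=x^3+a_2x^2+a_4x+a_6$ with $a_1=a$, $a_3=1$ and $a_2=a_4=a_6=0$:
\[
b_2=a^2,\qquad b_4=a,\qquad b_6=1,\qquad c_4=b_2^2-24b_4=a^4-24a=a(a^3-24).
\]
The discriminant is $\Delta=a^3-27$, up to sign. This is consistent with the nonsingularity condition $a^3\neq 27$ recalled above, and one can check it from $\Delta=-b_2^2b_8-8b_4^3-27b_6^2+9b_2b_4b_6$ with $b_8=-1$, which gives $\Delta=a^3-27$. Therefore
\[
j(E_a)=\frac{c_4^3}{\Delta}=\frac{a^3(a^3-24)^3}{a^3-27}.
\]

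Second, clear denominators. The identity
\[
(a^3-27)\,j-a^3(a^3-24)^3=-\Gamma_2(a,j)
\]
should hold. Indeed, $a^3(a^3-24)^3=a^{12}-72a^9+1728a^6-13824a^3$, and comparing term by term with $\Gamma_2(X,Y)=X^{12}-72X^9+1728X^6-X^3Y-13824X^3+27Y$ confirms it.

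Finally, let $\delta\in\mathbb{F}_{p^2}$ be a root of $\Gamma_2(X,j(E))$. Then $(\delta^3-27)\,j(E)=\delta^3(\delta^3-24)^3$.
\begin{itemize}
\item If $\delta^3\neq 27$, the curve $y^2+\delta xy+y=x^3$ is nonsingular and its $j$-invariant is $\delta^3(\delta^3-24)^3/(\delta^3-27)=j(E)$, as claimed.
\item If $\delta^3=27$, the relation forces $27\cdot 3^3=0$, which is impossible since $p>3$. So this case cannot occur.
\end{itemize}

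The only delicate step is the correct computation of the discriminant, that is, its sign and normalization. I would double-check it with a known triangular curve or by direct symbolic computation. The rest follows immediately from the polynomial identity.
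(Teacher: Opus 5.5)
Your proposal is correct and takes essentially the same route as the paper: the formula $j=a^3(a^3-24)^3/(a^3-27)$ with denominators cleared is exactly $\Gamma_2$. It is in fact slightly more complete, since you derive the formula and rule out the singular case $\delta^3=27$, where $\Gamma_2(\delta,j)=3^6\neq 0$.

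One small slip: here $b_8=0$, not $-1$ (for instance from $4b_8=b_2b_6-b_4^2=a^2-a^2$). With $b_8=-1$ your discriminant formula would give $a^4+a^3-27$, whereas $b_8=0$ gives the correct $\Delta=a^3-27$.
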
 

\begin{proof}
Let  $y^2 + axy + y = x^3$  be the triangular form of $E$ with $a\in\mathbb{F}_{p^2}$, $a^3\neq27$. 
We recall that the $j$-invariant of $E$ is given by the formula 
$$j(E) = \frac{a^3(a^3 - 24)^3}{a^3 - 27}.$$
Thus, there are $12$ possible $a$'s giving an elliptic curve in triangular form with $j$-invariant $j(E)$. By clearing denominators, these can be expressed as the roots of the polynomial $\Gamma_2(X,j(E))$, where
\[
\Gamma_2(X,Y) = X^{12} - 72X^9 + 1728X^6 - X^3Y - 13824X^3 + 27Y\in \mathbb{Z}[X,Y].
\]
\end{proof}

\subsection{The modeling}
We assume the following setup.
We have two supersingular elliptic curves $E_{\start}$ and $E_{\finish}$ over $\mathbb{F}_{p^2}$ in triangular form that are connected via an isogeny $\varphi$ of degree $3^m$ ($m\in\mathbb{Z}_{\geq1}$) and suppose that the first $3$-isogeny in the decomposition of $\varphi$ has kernel $\neq\langle(0,0)\rangle$. 
We assume further that $|E_{\start}(\mathbb{F}_{p^2})| = (p - 1)^2$ if $p\equiv 1\mod 3$ and  $|E_{\start}(\mathbb{F}_{p^2})| = (p + 1)^2$ if $p\equiv 2\mod 3$.
Let $a_{\start}$ be the $xy$-coefficient of the triangular form of $E_{\start}$ and let $j(E_{\finish})$ be the $j$-invariant of $E_{\finish}$. 
Using Renes polynomials (Definition~\ref{def:Renesdegree3}), we construct a polynomial system whose solutions represent the $a$'s of the triangular forms of the elliptic curves in the path that connects $E_{\start}$ and $E_{\finish}$. Notice that Tate's theorem ensures that all curves in the path will have the same number of rational points as $E_{\start}$, and thus can be put in triangular form as well by Lemma~\ref{lem:3torsion-rationality}.

\begin{tcolorbox}[
  title = \hypertarget{system6}{Polynomial system 4: Renes Formulas for $3$-isogenies, complete},
  colframe=black,         
  colback=white,          
  coltitle=black,         
  colbacktitle=white,     
  boxrule=0.5pt,
  arc=4pt,
  fonttitle=\bfseries,
  sharp corners,
  top=1mm,
  bottom=0mm,
]
\raggedright $U_{m} = \mathbb{F}_{p^2}[a_1,\ldots,a_{m}]$\\
$\mathcal{F}_{3\text{-Renes},m} = [\Gamma_1(a_{\start},a_1),\Gamma_1(a_1,a_2),\cdots,\Gamma_1(a_{m-1},a_{m}),\Gamma_2(a_{m},j(E_{\finish}))] \subseteq U_{m}$

\begin{equation*}\label{eq_syst_Renes_3}
\begin{cases}
    &\Gamma_1(a_{\start},a_1) = 0\\  
    &\Gamma_1(a_1,a_2) = 0\\
    &\cdots \cdots\\
    &\Gamma_1(a_{m-2},a_{m-1}) = 0\\
    &\Gamma_1(a_{m-1},a_m) = 0\\
    &\Gamma_2(a_{m},j(E_{\finish})) = 0 
\end{cases}    
\end{equation*}
\end{tcolorbox}

The solutions of Polynomial system \hyperlink{system6}{4} are in bijection with the set $P(3,m,E_{\start},j(E_{\finish}))$ of Definition~\ref{def:P(d,m,E,j)set} which parametrizes all possible non-backtracking paths of degree $3$ isogenies of length $m$ that start from $E_{\start}$ and ends into a supersingular elliptic curve with the same $j$-invariant as $E_{\finish}$. We state this in the next theorem. The proof is analogous to the proof of Theorem~\ref{thm:fund_thm_deg2}, where Lemma~\ref{lem:3torsion-rationality},  Lemma~\ref{lemma:Renes3}, Proposition~\ref{prop:properties of Gamma1}, and Proposition~\ref{prop:propertiesofGamma2} shall be used.

\begin{theorem}
	Let $E_{\start},E_{\finish}$ be two supersingular elliptic curves in triangular form over $\mathbb{F}_{p^2}$ (with the $y$-coefficient equal to $1$). Assume that $|E_{\start}(\mathbb{F}_{p^2})| = (p - 1)^2$ if $p\equiv 1\mod 3$ and  $|E_{\start}(\mathbb{F}_{p^2})| = (p + 1)^2$ if $p\equiv 2\mod 3$. Then, given $m \in \mathbb{Z}_{\geq 1}$, there is a one to one correspondence between the set of $\mathbb{F}_{p^2}$-rational solutions to Polynomial system \hyperlink{system6}{4}
     and the set $P(3,m,E_{\start},j(E_{\finish}))$. 
\end{theorem}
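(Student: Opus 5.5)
I will mirror the proof of Theorem~\ref{thm:fund_thm_deg2}, constructing a map $\Theta_3$ from $\mathbb{F}_{p^2}$-solutions of Polynomial system~\hyperlink{system6}{4} to $P(3,m,E_{\start},j(E_{\finish}))$ and an explicit inverse $\Lambda_3$.

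\emph{Solutions to paths.} Let $(a_1,\dots,a_m)\in\mathbb{F}_{p^2}^m$ be a solution and set $a_0=a_{\start}$, $E_0=E_{\start}$. By the hypothesis on $|E_{\start}(\mathbb{F}_{p^2})|$ and Lemma~\ref{lem:3torsion-rationality}, $E_{\start}[3]\subseteq E_{\start}(\mathbb{F}_{p^2})$. Proposition~\ref{prop:properties of Gamma1} then says that the roots of $\Gamma_1(a_0,Y)$ are exactly the triangular coefficients of the codomains of the three $3$-isogenies with kernel $\neq\langle(0,0)\rangle$. Hence $a_1$ determines such an isogeny $\varphi_0\colon E_{\start}\to E_1\colon y^2+a_1xy+y=x^3$. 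By Tate's theorem, $|E_1(\mathbb{F}_{p^2})|=|E_{\start}(\mathbb{F}_{p^2})|$, so the hypotheses hold again and we can iterate to obtain $\varphi_i\colon E_i\to E_{i+1}$ with $\ker\varphi_i\neq\langle(0,0)\rangle$.

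By Lemma~\ref{lemma:Renes3}, the dual $\widehat{\varphi_{i-1}}$ is the isogeny on $E_i$ with kernel $\langle(0,0)\rangle$. So $\varphi_i\neq\widehat{\varphi_{i-1}}$, which gives non-backtracking. Finally, $\Gamma_2(a_m,j(E_{\finish}))=0$ together with Proposition~\ref{prop:propertiesofGamma2} gives $j(E_m)=j(E_{\finish})$. We define $\Theta_3$ of the solution to be the class of this chain.

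\emph{Paths to solutions.} Given a path, $\ker\varphi_0$ is generated by a rational $3$-torsion point $\neq(0,0)$. Lemma~\ref{lemma:Renes3} produces a triangular codomain $E(a_1)$ with $a_1$ a root of $\Gamma_1(a_{\start},Y)$. Since codomains are determined up to isomorphism by the kernel, there is an isomorphism $\psi_1\colon E_1\to E(a_1)$, and we replace the path by the equivalent one through $E(a_1)$. At the next step non-backtracking says $\ker\varphi_1\neq\ker\widehat{\varphi_0}=\langle(0,0)\rangle$ in the triangular model. We are therefore again among the three isogenies of Proposition~\ref{prop:properties of Gamma1}, which yields $a_2$, and we continue inductively. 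At the end, $j(E(a_m))=j(E_{\finish})$, so $a_m$ is a root of $\Gamma_2(X,j(E_{\finish}))$ by the $j$-invariant formula in the proof of Proposition~\ref{prop:propertiesofGamma2}.

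\emph{Main obstacle.} The delicate step is showing that $\Lambda_3$ is well defined and inverse to $\Theta_3$. Concretely, one must check that the roots $\gamma_1,\gamma_2,\gamma_3$ of $\Gamma_1(a,Y)$ correspond bijectively to the three kernels, with distinct kernels giving distinct $a$-values. Otherwise a single root could encode two inequivalent isogenies, or one class could produce two solutions.

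I would handle this by the transformation argument behind Renes' formulas. Two triangular forms $y^2+axy+y=x^3$ related by an isomorphism fixing the point $(0,0)$ with the given scaling must coincide up to the cube-root-of-unity twist. Since the isomorphisms $\psi_i$ in $\sim$ are arbitrary, the chosen representative class determines the kernels $\ker(\varphi_i)$, and hence, via $\gamma=-3(2+ax_P)$, the values $a_i$. Conversely, $\Theta_3\circ\Lambda_3$ returns the same kernels by construction. Distinctness of the three $\gamma_i$ follows since the map $x_P\mapsto -3(2+ax_P)$ is injective for $a\neq 0$ and the three $x$-coordinates are distinct. The case $a=0$, where $\Gamma_1(0,Y)=-(Y+6)^3$, needs separate treatment, or exclusion by noting that it corresponds to $j=0$, which is ruled out by the point-count hypothesis via Remark~\ref{rem:numofratpointsperjinv}.
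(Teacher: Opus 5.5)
Your architecture matches the paper's. The paper's entire proof is the remark that one argues as in Theorem~\ref{thm:fund_thm_deg2} using Lemma~\ref{lem:3torsion-rationality}, Lemma~\ref{lemma:Renes3}, Proposition~\ref{prop:properties of Gamma1} and Proposition~\ref{prop:propertiesofGamma2}, and your $\Theta_3$ and $\Lambda_3$ do exactly that. You also correctly locate the one place where degree $3$ is \emph{not} analogous to degree $2$. In degree $2$ the variable $\alpha_i$ is itself the $x$-coordinate of the kernel generator. Here $a_i$ records only the codomain, so $\ker\varphi_i$ must be recovered from $(a_i,a_{i+1})$ via $x_P=-(a_{i+1}+6)/(3a_i)$, which needs $a_i\neq 0$.

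The gap is how you dispose of $a_i=0$. Remark~\ref{rem:numofratpointsperjinv} proves only $j\neq 0,1728\Rightarrow |E(\mathbb{F}_{p^2})|=(p\pm1)^2$, and the converse you need is false. For $p\equiv 2\pmod 3$, the curve $y^2+y=x^3$ (i.e.\ $a=0$) becomes $Y^2=x^3+\frac14$ with $Y=y+\frac12$. Since cubing is bijective on $\mathbb{F}_p$, it has $p+1$ points over $\mathbb{F}_p$, hence exactly $(p+1)^2$ points over $\mathbb{F}_{p^2}$. That is precisely the theorem's hypothesis. Moreover, the hypothesis constrains only $E_{\start}$, while $a_i=0$ can also occur at an intermediate step, since $\Gamma_1(a,0)=9(a^3-24)$.

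A ``separate treatment'' cannot close this gap, because the correspondence genuinely fails there. Take $p\equiv 2\pmod 3$, $E_{\start}\colon y^2+y=x^3$, $E_{\finish}\colon y^2-6xy+y=x^3$, and $m=1$.
\begin{itemize}
\item Since $\Gamma_1(0,Y)=-(Y+6)^3$, Polynomial system~4 has the single solution $a_1=-6$.
\item By Lemma~\ref{lemma:Renes3}, all three kernels $\neq\langle(0,0)\rangle$ give the codomain $y^2-6xy+y=x^3$, consistent with $\Phi_3(0,Y)=Y(Y+12288000)^3$.
\item These are three distinct classes in $P(3,1,E_{\start},j(E_{\finish}))$, because $\sim$ uses $\mathrm{id}$ on $E_{\start}$ and so forces $\ker\varphi_0=\ker\varphi_0'$.
\end{itemize}
The same happens at any intermediate $a_i=0$. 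The commuting squares force $\psi_1,\dots,\psi_i$, so two paths agreeing up to $E_i$ but differing in $\ker\varphi_i$ are never equivalent.

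What your argument actually establishes is weaker. $\Lambda_3$ is well defined, because each transport isomorphism $\psi_i$ is the unique one making the square with Renes' isogeny commute; this does not rely on any uniqueness of triangular models. $\Lambda_3$ is also surjective, and each solution has exactly $3^{\#\{0\le i\le m-1\,:\,a_i=0\}}$ preimages, where $a_0=a_{\start}$. So you obtain a bijection only after adding a hypothesis such as $a_i\neq 0$ for $0\le i\le m-1$. The paper's one-line ``analogous'' proof silently passes over this same point.
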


As for the degree $2$ case, this complete polynomial system is difficult to solve with Gr\"obner bases because the polynomial $\Gamma_2(a_{m},j(E_{\finish}))$ has degree $12$. So, as before, we replace the complete system with 12 compact ones, replacing the last equation with $\Gamma_1(a_{m-1},a)$, where the $a$'s are the roots of the polynomial $\Gamma_2(X,j(E_{\finish}))$.
We obtain the following systems, which we will focus our analysis on.

\begin{tcolorbox}[
  title = \hypertarget{system7}{Polynomial system 5: Renes Formulas for $3$-isogenies, compact},
  colframe=black,         
  colback=white,          
  coltitle=black,         
  colbacktitle=white,     
  boxrule=0.5pt,
  arc=4pt,
  fonttitle=\bfseries,
  sharp corners,
  top=1mm,
  bottom=0mm,
]
\raggedright $U_{m-1} = \mathbb{F}_{p^2}[a_1,\ldots,a_{m-1}]$\\
$\mathcal{R}_{3,m} = [\Gamma_1(a_{\start},a_1),\Gamma_1(a_1,a_2),\cdots,\Gamma_1(a_{m-2},a_{m-1}),\Gamma_1(a_{m-1},a)] \subseteq U_{m-1}$

\begin{equation*}\label{eq_syst_Renes_3}
\begin{cases}
    &\Gamma_1(a_{\start},a_1) = 0\\  
    &\Gamma_1(a_1,a_2) = 0\\
    &\cdots \cdots\\
    &\Gamma_1(a_{m-2},a_{m-1}) = 0\\
    &\Gamma_1(a_{m-1},a) = 0 
\end{cases}    
\end{equation*}
\end{tcolorbox}

\section{Algebraic Analysis}\label{section:algebraicanalysis}
In this section, we study some algebraic properties of the systems introduced:
\begin{itemize}
    \item the system of modular polynomials $\mathcal{M}_{\ell,m}$,  where $\ell\neq p$ is a prime number (Polynomial system~\hyperlink{system1}{1});
    \item the system of (compact) Renes polynomials $\mathcal{R}_{2,m-1}$ in degree $2$ (Polynomial System~\hyperlink{system3}{3});
    \item the system of (compact) Renes polynomials $\mathcal{R}_{3,m}$ in degree $3$ (Polynomial System~\hyperlink{system7}{5}).
\end{itemize}
Notice, that for degree $2$ we consider the system $\mathcal{R}_{2,m-1}$ parametrizing paths of length $m-1$ since it has $m-1$ variables as the system  $\mathcal{R}_{3,m}$ of degree $3$ (which parametrizes paths of length $m$).

First, we prove that the systems are zero-dimensional, i.e., they admit a finite number of solutions over the algebraic closure $\overline{\mathbb{F}}_p$. For modular polynomials, this was already observed in \cite{Taketal}. First, we need the following lemma.

\begin{lemma}\label{lem:zero-dim}
Let $S = K[x_1,\ldots,x_n]$ be a polynomial ring over a field $K$ and let $I = (f_1,\ldots,f_{n+1}) \subseteq S$ be an ideal. If, up to reordering the generators of $I$, there exists a term order $<$ on $S$ such that $\LT_<(f_1),\ldots,\LT_<(f_n)$ are coprime and of degree $\geq 1$, then $I$ is zero-dimensional and $f_1,\ldots,f_n$ is a regular sequence of (maximal) length $n$ in $I$.
\end{lemma}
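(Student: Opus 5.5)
The plan has two parts: first show that $f_1,\ldots,f_n$ is a regular sequence, then deduce that $I$ is zero-dimensional from the ideal $J=(f_1,\ldots,f_n)$, using $J\subseteq I$.

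Write $m_i=\LT_<(f_i)$ for $i=1,\ldots,n$. Since the $m_i$ are pairwise coprime terms of positive degree, all S-polynomials $S(f_i,f_j)$ reduce to zero modulo $\{f_i,f_j\}$ by Buchberger's first criterion. Hence $f_1,\ldots,f_n$ is a Gr\"obner basis of $J$, and
\[
\LT_<(J)=(m_1,\ldots,m_n).
\]

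Next we show that $(m_1,\ldots,m_n)$ is zero-dimensional, i.e.\ that $S/(m_1,\ldots,m_n)$ is a finite-dimensional $K$-vector space. Since the $m_i$ are pairwise coprime, each variable occurs in at most one $m_i$. Each $m_i$ has degree $\geq 1$, so it involves at least one variable. We have $n$ terms and $n$ variables, and the variables used by distinct terms are disjoint. So each $m_i$ involves exactly one variable, and after relabeling $m_i=x_i^{d_i}$ with $d_i\geq 1$. Therefore
\[
\dim_K S/(m_1,\ldots,m_n)=\prod_i d_i<\infty.
\]
By Macaulay's theorem, the standard monomials give a $K$-basis of $S/J$, so $\dim_K S/J=\prod_i d_i$ is finite. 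Thus $J$ is zero-dimensional. Since $J\subseteq I$, also $\dim_K S/I<\infty$, so $I$ is zero-dimensional; this also covers the case $I=S$.

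For regularity, $S/J$ is Artinian, so $\operatorname{ht} J=n$. An ideal of height $n$ generated by $n$ elements in the Cohen--Macaulay ring $S$ is generated by a regular sequence, in any order.

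Homogeneity might seem to be needed here. To avoid that, one can argue via leading terms. The sequence $x_1^{d_1},\ldots,x_n^{d_n}$ is regular. By a standard deformation (flat degeneration to the initial ideal), the Koszul homology of $f_1,\ldots,f_n$ injects into that of the initial terms, so the sequence is regular. Alternatively one can use the direct statement that a set of polynomials with pairwise coprime leading terms forms a regular sequence. The height argument, localized at maximal ideals containing $J$, already suffices: in each local Cohen--Macaulay ring $S_{\mathfrak m}$, $n$ elements generating an ideal of height $n$ form a regular sequence. Regularity then holds globally, because $f_i$ being a nonzerodivisor on $S/(f_1,\ldots,f_{i-1})$ is a local condition, and at maximal ideals not containing $J$ the quotient $S_{\mathfrak m}/J_{\mathfrak m}$ is zero.

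The main obstacle is the precise justification of this regular-sequence claim in the non-homogeneous setting. The local Cohen--Macaulay argument above handles it, provided care is taken that the sequence is proper, i.e.\ $J\neq S$. Here "regular sequence" is understood in the sense where the quotient may be zero.
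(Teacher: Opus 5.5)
Your first half (Buchberger's first criterion, the pigeonhole argument giving $\LT_<(f_i)=x_{\sigma(i)}^{d_i}$, Macaulay's basis theorem, and $J\subseteq I$) is exactly the paper's argument. It also settles properness, since $\dim_K S/J=\prod_i d_i\geq 1$ gives $J\neq S$.

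\textbf{The gap is in the regular-sequence part.} Your claim that $n$ elements generating a height-$n$ ideal of the Cohen--Macaulay ring $S$ form a regular sequence in any order is false outside the local or graded setting. In $K[x,y,z]$ the elements $y(1-x),\,z(1-x),\,x$ generate $(x,y,z)$, which has height $3$. Yet $y\cdot z(1-x)\in(y(1-x))$ while $y\notin(y(1-x))$, so they are not regular in this order.

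Your localization repair does not close this. To show $f_i$ is a nonzerodivisor on $S/(f_1,\ldots,f_{i-1})$ you must check every maximal ideal $\mathfrak m\supseteq(f_1,\ldots,f_{i-1})$, not only those containing $J$. At an $\mathfrak m$ that contains $f_1,\ldots,f_{i-1}$ but not $J$, the vanishing of $S_{\mathfrak m}/J_{\mathfrak m}$ says nothing about whether $f_1,\ldots,f_{i-1}$ is regular in $S_{\mathfrak m}$. In the example, $\mathfrak m=(x-1,y,z)$ is exactly such an ideal, and that is where regularity fails.

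The Koszul-homology remark has the same defect. Acyclicity of the Koszul complex does not depend on the order of the elements, while regularity in a non-local ring does: in the order $x,\,y(1-x),\,z(1-x)$ the same three elements are regular.

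\textbf{How to fix it.} A height argument needs $\operatorname{ht}(f_1,\ldots,f_i)=i$ for every $i$; unmixedness in the Cohen--Macaulay ring $S$ then gives regularity. You could obtain this by rerunning your Gr\"obner argument on each initial segment $J_i=(f_1,\ldots,f_i)$. The standard monomials show that $S/J_i$ is a finite module over the polynomial ring in the $n-i$ variables not occurring in $\LT_<(f_1),\ldots,\LT_<(f_i)$.

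It is simpler to prove the statement you list only as an alternative. That is precisely the fact the paper cites from Kreuzer--Robbiano.
\begin{itemize}
\item By Buchberger's criterion, $f_1,\ldots,f_{i-1}$ is a Gr\"obner basis of $J_{i-1}=(f_1,\ldots,f_{i-1})$.
\item Suppose $gf_i\in J_{i-1}$, and replace $g$ by its normal form modulo $J_{i-1}$.
\item If this normal form is nonzero, then $\LT_<(g)\LT_<(f_i)\in(\LT_<(f_1),\ldots,\LT_<(f_{i-1}))$.
\item Coprimality then forces some $\LT_<(f_j)$ with $j<i$ to divide $\LT_<(g)$, contradicting normality.
\end{itemize}
Hence $g\in J_{i-1}$. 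Together with $J\neq S$, this shows $f_1,\ldots,f_n$ is a regular sequence, and in any order.
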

\begin{proof}
Since $\LT_<(f_1),\ldots,\LT_<(f_n)$ are coprime, the polynomials $f_1,\ldots,f_n$ form a reduced Gr\"obner basis for the ideal $\Tilde{I} = (f_1,\ldots,f_n)$. Moreover, since $\LT_<(f_1),\ldots,\LT_<(f_n)$ are precisely $n$ and of degree $\geq 1$, there exists a permutation $\sigma \in \mathfrak{S}_n$ such that $\LT_<(f_i) = x_{\sigma(i)}^{\deg(\LT_<(f_i))}$ for all $i = 1,\ldots,n$. So, the ideal $\Tilde{I}$ is zero-dimensional and $\LT_<(f_1),\ldots,\LT_<(f_n)$ is a regular sequence of maximal length $n$ in $\Tilde{I}$. Macaulay Basis Theorem (see e.g. \cite[Theorem~1.5.7]{Robbiano1}) yields that the ideal $\Tilde{I}$ is zero-dimensional, hence $I$ is zero-dimensional as well. Finally, by  \cite[Tutorial 54]{Robbiano2}, we obtain that $f_1,\ldots,f_n$ is a regular sequence of maximal length $n$ in $I$.
\end{proof}

\begin{proposition}\label{prop_zero_dim}
Let $m \geq 2$. The ideals $(\mathcal{M}_{\ell,m})$, $(\mathcal{R}_{2,m-1})$,  and $(\mathcal{R}_{3,m})$ are zero-dimensional and contain a regular sequence of (maximal) length $m-1$.
\end{proposition}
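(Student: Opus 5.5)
Each of the three systems has $m$ polynomials in $m-1$ variables, so the plan is to apply Lemma~\ref{lem:zero-dim} with $n=m-1$. For each system I will pick a term order and $m-1$ of the generators whose leading terms are pure powers of distinct variables, hence pairwise coprime and of positive degree. Lemma~\ref{lem:zero-dim} then gives zero-dimensionality and the regular sequence of length $m-1$ at once. The work is reading off the right leading terms, taking care that specializing $j_{\start}$, $A_{\start}$, $a_{\start}$, $j_{\finish}$, $A$ or $a$ does not destroy them.

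\emph{Modular polynomials.} In $S_{m-1}$ with variables $j_1,\dots,j_{m-1}$, I take the lexicographic order with $j_1>j_2>\dots>j_{m-1}$. By Proposition~\ref{properties_modular_poly}, $\Phi_\ell$ is monic of degree $\ell+1$ in each variable. Therefore $\Phi_\ell(j_{\start},j_1)$ is monic in $j_1$ of degree $\ell+1$ with constant coefficients, and its leading term is $j_1^{\ell+1}$. For $i=1,\dots,m-2$, the polynomial $\Phi_\ell(j_i,j_{i+1})$ contains $j_i^{\ell+1}$ with coefficient $1$, and every other monomial has $j_i$-degree at most $\ell+1$. I need to check that among monomials of top $j_i$-degree, $j_i^{\ell+1}$ is the only one. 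This follows because the $X^{\ell+1}$ coefficient of $\Phi_\ell$, viewed as a polynomial in $Y$, is exactly $1$, which is what "monic in $X$" means here. So the leading term is $j_i^{\ell+1}$. This gives $m-1$ generators with leading terms $j_1^{\ell+1},\dots,j_{m-1}^{\ell+1}$, and the last generator $\Phi_\ell(j_{m-1},j_{\finish})$ plays the role of $f_{n+1}$.

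\emph{Renes, degree 2.} The system $\mathcal{R}_{2,m-1}$ lives in $\mathbb{F}_{p^2}[\alpha_0,\dots,\alpha_{m-2}]$. I use lex with $\alpha_{m-2}>\dots>\alpha_0$, so that the later variable is larger.
\begin{itemize}
\item $\Psi_1(A_{\start},\alpha_0)=\alpha_0^2+A_{\start}\alpha_0+1$ has leading term $\alpha_0^2$.
\item $\Psi_2(\alpha_{i},\alpha_{i+1})=\alpha_{i+1}^2+\alpha_{i+1}(2-4\alpha_i^2)+1$ has leading term $\alpha_{i+1}^2$, since the $\alpha_{i+1}$-degree decides first.
\end{itemize}
These $m-1$ generators have leading terms $\alpha_0^2,\dots,\alpha_{m-2}^2$, and $\Psi_4(\alpha_{m-2},A)$ is the extra generator.

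\emph{Renes, degree 3.} The system $\mathcal{R}_{3,m}$ lives in $U_{m-1}$. With lex $a_{m-1}>\dots>a_1$, the generator $\Gamma_1(a_{i-1},a_i)$ has leading term $a_i^3$, because the term $-Y^3$ has coefficient $-1$ and $Y$-degree $3$ is maximal. This also holds for $i=1$ after substituting $a_{\start}$. The generators $\Gamma_1(a_{i-1},a_i)$ for $i=1,\dots,m-1$ thus have leading terms $a_1^3,\dots,a_{m-1}^3$, and $\Gamma_1(a_{m-1},a)$ is the extra one.

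The only delicate point is the modular case: one must argue that the leading coefficient in $j_{i+1}$ does not involve $j_i$, which monicity gives. The Renes cases are immediate because the relevant top coefficients are the constants $1$ and $-1$.
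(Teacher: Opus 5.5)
Your overall strategy is the paper's: apply Lemma~\ref{lem:zero-dim} with a lexicographic order under which $m-1$ of the generators have pure-power leading terms. Your two Renes cases coincide with the paper's proof, with the same orders and the same leading terms $\alpha_0^2,\dots,\alpha_{m-2}^2$ and $a_1^3,\dots,a_{m-1}^3$.

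The modular case, however, does not work as written. With your order $j_1>j_2>\dots>j_{m-1}$ you correctly find that $\Phi_\ell(j_i,j_{i+1})$ has leading term $j_i^{\ell+1}$. But the $m-1$ generators you selected are $\Phi_\ell(j_{\start},j_1),\Phi_\ell(j_1,j_2),\dots,\Phi_\ell(j_{m-2},j_{m-1})$. Their leading terms are $j_1^{\ell+1},j_1^{\ell+1},j_2^{\ell+1},\dots,j_{m-2}^{\ell+1}$. For $m\geq 3$ the term $j_1^{\ell+1}$ occurs twice and no power of $j_{m-1}$ occurs. So the coprimality hypothesis of Lemma~\ref{lem:zero-dim} fails, and your claimed list $j_1^{\ell+1},\dots,j_{m-1}^{\ell+1}$ does not follow from your own computation.

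The repair is immediate, in either of two ways.
\begin{itemize}
\item Reverse the order to $j_{m-1}>\dots>j_1$, as the paper does. Then lex compares $j_{i+1}$-degrees first, and monicity of $\Phi_\ell$ in $Y$ gives $\LT(\Phi_\ell(j_i,j_{i+1}))=j_{i+1}^{\ell+1}$. This is in fact what your closing remark about the leading coefficient in $j_{i+1}$ describes, so the body of your modular paragraph and that remark currently contradict each other.
\item Keep your order but take $\Phi_\ell(j_1,j_2),\dots,\Phi_\ell(j_{m-1},j_{\finish})$ as the $m-1$ polynomials. Here $\Phi_\ell(j_{m-1},j_{\finish})$ is monic of degree $\ell+1$ in $j_{m-1}$ by monicity in $X$. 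Then let $\Phi_\ell(j_{\start},j_1)$ be the extra generator.
\end{itemize}
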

\begin{proof}
For every statement, we want to use Lemma \ref{lem:zero-dim} to conclude. For $(\mathcal{M}_{\ell,m})$, we consider a $\Lex$ term order $<$ with $j_{m-1} > j_{m-2} > \cdots > j_1$. Then, the leading terms of the first $m-1$ polynomials of $\mathcal{M}_{\ell,m}$ are $j_1^{\ell +1},j_2^{\ell + 1},\ldots,j_{m-1}^{\ell + 1}$, which are coprime and of degree $\geq1$. For $(\mathcal{R}_{2,m-1})$, we consider a $\Lex$ term order $<$ with $\alpha_{m-2} > \alpha_{m-3} > \cdots > \alpha_0$. Then, the leading terms of the first $m-1$ polynomials of $\mathcal{R}_{2,m-1}$ are $\alpha_0^{2},\alpha_1^{2},\ldots,\alpha_{m-2}^{2}$, which are coprime and of degree $\geq1$. For $(\mathcal{R}_{3,m})$ we consider a $\Lex$ term order $<$ with $a_{m-1} > a_{m-2} > \cdots > a_1$. Then, the leading terms of the first $m-1$ polynomials of $\mathcal{R}_{3,m}$ are $a_1^{3},a_2^{3},\ldots,a_{m-1}^{3}$, which are coprime and of degree $\geq1$.
\end{proof}

The notion of degree of regularity was introduced in \cite{BFS04} as a way to measure the complexity of solving a polynomial system via linear-algebra based algorithms. Indeed, under suitable assumptions the degree of regularity provides an upper bound on the solving degree \cite{Salizzoni}. In a nutshell, given a polynomial system $\mathcal{F}=\{f_1,\dots,f_s\}$ the degree of regularity $d_{\text{reg}}$ is the minimum integer $d$ such that the homogeneous degree $d$ piece of the ideal $(f_1^{\mathrm{top}},\dots,f_s^{\mathrm{top}})$ coincides with the polynomial ring in degree $d$. If this minimum does not exists, i.e., when the ideal $(f_1^{\mathrm{top}},\dots,f_s^{\mathrm{top}})$ is not zero-dimensional, then the degree of regularity is not defined or is set to be $\infty$, depending on the convention.
Thus, when $(f_1^{\mathrm{top}},\dots,f_s^{\mathrm{top}})$ is not zero-dimensional the degree of regularity does not provide an immediate insight on the complexity of solving the related system.
We prove that this is the case for the systems under study. More precisely, in Proposition~\ref{prop:Krull-dim} we compute the Krull dimension of the ideal generated by the $\mathrm{top}$ part of the system. We need a technical lemma.

\begin{lemma}\label{lem:Krull-dim}
    Let $K$ be a field, and let $S=K[x_1,\ldots,x_n]$ be a polynomial ring. We consider the monomial ideal
    $I=(x_1,x_1x_2,\ldots,x_{n-1}x_n,x_n)$. Then, we have
    \[\dim(S/I)=
    \begin{cases}
        \frac{n-2}{2} \quad \mathrm{ if } \ n \ \mathrm{ even }\\
        \frac{n-1}{2} \quad \mathrm{ if } \ n \ \mathrm{ odd }.
    \end{cases}
    \]
\end{lemma}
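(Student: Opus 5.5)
Since $x_1$ and $x_n$ lie in $I$, the generators $x_1x_2$ and $x_{n-1}x_n$ are redundant. So I would first rewrite
\[
I=(x_1,\,x_n,\,x_2x_3,\,x_3x_4,\,\ldots,\,x_{n-2}x_{n-1}).
\]
Then $S/I\cong K[x_2,\ldots,x_{n-1}]/J$, where $J=(x_2x_3,\ldots,x_{n-2}x_{n-1})$ is the edge ideal of the path graph $P_{n-2}$ on the vertices $x_2,\ldots,x_{n-1}$. The small cases are read off directly. For $n=1$ we have $I=(x_1)$ and $\dim S/I=0$. For $n=2$ we have $I=(x_1,x_2)$ and the dimension is $0$. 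For $n=3$ we get $S/I\cong K[x_2]$, of dimension $1$. All three agree with the formula, so I may assume $n\geq 4$.

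Next I will use the standard description of the dimension of a quotient by a squarefree monomial ideal. The minimal primes of $J$ are the ideals $(x_i : i\in C)$ with $C$ a minimal vertex cover of the graph, so
\[
\dim K[x_2,\ldots,x_{n-1}]/J=\max\{|U| : U \text{ contains no pair } \{x_i,x_{i+1}\}\}.
\]
Equivalently, this maximum is the largest dimension of a coordinate subspace on which all generators vanish, which is the independence number of the path. For completeness I would justify this directly. If $U$ is an independent set, then the prime $(x_i : i\notin U)$ contains $J$ and gives a quotient of dimension $|U|$. Conversely, every prime containing $J$ contains a prime of this monomial form.

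Finally, I compute the independence number of a path on $k=n-2$ vertices, which is $\lceil k/2\rceil$. For the lower bound, alternate vertices $x_2,x_4,\ldots$ give an independent set of that size. For the upper bound, partition the vertices into $\lfloor k/2\rfloor$ consecutive pairs $\{x_{2i},x_{2i+1}\}$, plus one singleton when $k$ is odd. An independent set meets each pair in at most one vertex.

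Substituting $k=n-2$ gives $(n-2)/2$ when $n$ is even and $(n-1)/2$ when $n$ is odd, which is the claim. The only step that needs care is the identification of $\dim$ with the independence number, and it is a standard fact about Stanley–Reisner and edge ideals, so I do not expect a real obstacle there.
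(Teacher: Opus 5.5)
Your proof is correct and follows essentially the same route as the paper. Both reduce to the edge ideal of a path on $n-2$ vertices, use that the minimal primes are generated by variables (vertex covers), and bound the extremal size by the alternating choice together with a consecutive-pairs argument. The differences are cosmetic: you count independent sets, i.e.\ the complements of the paper's primes $(y_2,y_4,\ldots)$, and you justify the monomial form of the primes directly instead of citing the irredundant decomposition from Herzog--Hibi.
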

\begin{proof}
If $n = 1$ or $2$, then $\dim(S/I) = \dim(K) = 0$, since $K$ is a field. Let $n \geq 3$, then
$$\dim(S/I) = \dim(K[x_2,\ldots,x_{n-1}]/(x_2x_3,\ldots,x_{n-2}x_{n-1})).$$
Thus, for computing the Krull dimension we may replace the polynomial ring by $S=K[y_1,\ldots,y_t]$ and the ideal by $I= (y_1y_2,\ldots,y_{t-1}y_t)$.
Now, the dimension of $I$ can be computed as 
\[\dim S/I=t - \min\{\mathrm{ht}(P) \ | \ P \in \Min(I)\},
\]
where $\Min(I)$ is the set of minimal primes of $I$.
Now, let $t = 2k$ be even ($k \in \mathbb{Z}_{\geq 1}$). We claim that the ideal $\mathcal{P} = (y_2,y_4,\ldots,y_{2k})$ of height $k$ is the minimal prime ideal over $I$ that realizes the minimum above.

By \cite[Theorem 1.3.1, Corollary 1.3.4]{Herzog-Hibi} an irredundant presentation of $I$ is given by 
$$I = \bigcap\limits_{i=1,\ldots,s} V_i,$$
where each $V_i$ is a (prime) ideal of variables with $\text{ht}(V_i)=\mu(V_i)$.
Moreover, since $I$ is squarefree (hence radical), we can write
$$I = \bigcap\limits_{P \in \Min(I)} P.$$
For any $i=1,\ldots,2k-1$, we have $y_i \mid y_i y_{i+1}$ and $y_{i+1} \mid y_i y_{i+1}$ and either $i$ or $i+1$ is even, thus  the prime ideal $\mathcal{P}$ contains $I$. We claim that it does not exist a minimal prime ideal $\mathcal{P}'$ of $I$ such that $\text{ht}(\mathcal{P}') < \text{ht}(\mathcal{P})$. Once the claim is proved, then we are done since $\mathcal{P}$ will be minimal and of the smallest possible height.

Assume by contradiction, that such a $\mathcal{P}'$ exists. Thanks to the decompositions above, we have
$$I = \bigcap\limits_{i = 1,\ldots,s} V_i = \bigcap\limits_{P \in \Min(I)} P.$$
By properties of the irredundant decomposition (see \cite[Lemma 1.3.5]{Herzog-Hibi}), for every $P \in \Min(I)$ there exists $i \in \{1,\ldots,s\}$ such that $P = V_i$. Then, for some $h \in \{1,\ldots,s\}$, we have $\mathcal{P}' = V_h$.
In particular, $\mathcal{P}'$ is an ideal of variables, so $\mu(\mathcal{P}') = \text{ht}(\mathcal{P}') < k$.
This tells us that there exist two consecutive variables $y_c,y_{c+1}$ (for some $c \in \{1,\ldots,2k-1\}$) such that
$y_c,y_{c+1} \not\in \mathcal{P}'$. But $y_c y_{c+1} \in I$, so $I \not\subseteq \mathcal{P}'$, and this is a contradiction.
So, the Krull dimension of $\Tilde{I}$ is $t - k = 2k - k = k = \frac{n-2}{2}$. 

For $t = 2k - 1$ odd the argument is the same but $\mathcal{P} = (y_2,y_4,\ldots,y_{2k - 2})$ is the minimal prime ideal which realizes the height.
\end{proof}

\begin{proposition}\label{prop:Krull-dim}
Let $m \geq 2$. The (Krull) dimension of the ideals $(\mathcal{M}_{\ell,m}^\mathrm{top})$,$ (\mathcal{R}_{2,m-1}^{\mathrm{top}})$ and $(\mathcal{R}_{3,m}^{\mathrm{top}})$ coincides and it is equal to 
$$\begin{cases}
        \frac{m-2}{2} \quad \mathrm{ if } \ m \ \mathrm{ even }\\
        \frac{m-3}{2} \quad \mathrm{ if } \ m \ \mathrm{ odd }.
\end{cases}$$
In particular, if $m \geq 4$, these ideals are not zero-dimensional.
\end{proposition}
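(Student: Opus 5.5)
The plan is to compute the top-degree parts explicitly, show that in each of the three cases the ideal generated by them has the same radical as the monomial ideal $I=(x_1,x_1x_2,\ldots,x_{n-1}x_n,x_n)$ of Lemma~\ref{lem:Krull-dim} with $n=m-1$ variables, and then read off the dimension. Since $\dim S/J=\dim S/\sqrt{J}$, and the radical of a monomial ideal is generated by the square-free parts of its generators, it is enough to identify the generators of each $\mathrm{top}$ ideal up to nonzero scalars.

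\emph{Modular polynomials.} In $\Phi_\ell(j_{\start},j_1)$ the value $j_{\start}$ is a constant. By Proposition~\ref{properties_modular_poly}(2), $\Phi_\ell$ is monic of degree $\ell+1$ in $Y$, so the top part is $j_1^{\ell+1}$; symmetrically, the top part of $\Phi_\ell(j_{m-1},j_{\finish})$ is $j_{m-1}^{\ell+1}$. For each middle polynomial $\Phi_\ell(j_i,j_{i+1})$, Proposition~\ref{properties_modular_poly}(3) gives top part $-j_i^\ell j_{i+1}^\ell$, since $\ell$ is prime. Hence
\[
(\mathcal{M}_{\ell,m}^{\mathrm{top}})=(j_1^{\ell+1},j_1^\ell j_2^\ell,\ldots,j_{m-2}^\ell j_{m-1}^\ell,j_{m-1}^{\ell+1}),
\]
whose radical is $I$ in the $n=m-1$ variables $j_1,\ldots,j_{m-1}$.

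\emph{Renes polynomials of degree $2$.} The variables are $\alpha_0,\ldots,\alpha_{m-2}$, so again $n=m-1$. The top parts are as follows: $\Psi_1(A_{\start},\alpha_0)$ has top part $\alpha_0^2$; each $\Psi_2(\alpha_i,\alpha_{i+1})$ has top part $-4\alpha_i^2\alpha_{i+1}$, where $4\neq0$ because $p>3$; and $\Psi_4(\alpha_{m-2},A)$ has top part $-4\alpha_{m-2}^2$. The radical is therefore again $I$.

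\emph{Renes polynomials of degree $3$.} The variables are $a_1,\ldots,a_{m-1}$. The top parts are as follows:
\begin{itemize}
\item $\Gamma_1(a_{\start},a_1)$, in which $a_{\start}$ is a constant, has top part $-a_1^3$;
\item each $\Gamma_1(a_i,a_{i+1})$ has top part $a_i^3a_{i+1}^2$;
\item $\Gamma_1(a_{m-1},a)$ has top part $(a^2+3a+9)\,a_{m-1}^3$.
\end{itemize}

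The one point needing care, which I expect to be the only real obstacle, is that the coefficient $a^2+3a+9$ is nonzero. Since $a^3-27=(a-3)(a^2+3a+9)$, this coefficient vanishes only if $a^3=27$. That is impossible, because $a$ is a root of $\Gamma_2(X,j(E_{\finish}))$ and hence the $xy$-coefficient of a nonsingular triangular curve, which requires $a^3\neq27$. One could also check directly that $a^3=27$ is not a root of $\Gamma_2$. With this settled, the radical is once more $I$.

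In all three cases Lemma~\ref{lem:Krull-dim} applies with $n=m-1$:
\begin{itemize}
\item if $m$ is even, then $n$ is odd and the dimension is $\frac{n-1}{2}=\frac{m-2}{2}$;
\item if $m$ is odd, then $n$ is even and the dimension is $\frac{n-2}{2}=\frac{m-3}{2}$.
\end{itemize}
The degenerate case $m=2$ gives $n=1$ and the ideal $(x_1)$, of dimension $0$, which agrees with the formula. For $m\geq4$ the formula gives a value of at least $1$, so these ideals are not zero-dimensional.
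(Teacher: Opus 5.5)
Your proposal is correct and follows the paper's proof. Like the paper, you compute the top-degree parts, pass to the radical to get the monomial ideal $(x_1,x_1x_2,\ldots,x_{n-1}x_n,x_n)$ in $n=m-1$ variables, and apply Lemma~\ref{lem:Krull-dim}. Your check that the leading coefficients $-4$ and $a^2+3a+9$ are nonzero (the latter because $a^3\neq 27$ for every root $a$ of $\Gamma_2(X,j(E_{\finish}))$) supplies a detail the paper leaves implicit.
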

\begin{proof}
For every statement, we use Lemma~\ref{lem:Krull-dim} with $n=m-1$ to conclude. For $(\mathcal{M}_{\ell,m}^\mathrm{top})$, we have that $(\mathcal{M}_{\ell,m}^\mathrm{top}) = (j_1^{\ell+1},j_1^{\ell} j_2^{\ell},\ldots,j_{m-2}^{\ell} j_{m-1}^{\ell},j_{m-1}^{\ell+1})$. Passing to the radical the Krull dimension does not change and $\sqrt{(\mathcal{M}_{\ell,m}^\mathrm{top})} = (j_1,j_1j_2,\ldots,j_{m-2}j_{m-1},j_{m-1})$ has the shape of Lemma~\ref{lem:Krull-dim}. Similarly,  we have that  $\sqrt{(\mathcal{R}_{2,m-1}^{\mathrm{top}})} = (\alpha_0,\alpha_0\alpha_1,\ldots,\alpha_{m-3}\alpha_{m-2},\alpha_{m-2}),$ and
$\sqrt{(\mathcal{R}_{3,m}^{\mathrm{top}})}  =(a_1,a_1a_2,\ldots,\allowbreak a_{m-2}a_{m-1},a_{m-1})$. 
\end{proof}

Another consequence of Proposition~\ref{prop:Krull-dim} is that when $m\geq4$ the polynomial systems are not semi-regular \cite{BFS04,BDDMMT21,Pardue}. 

\begin{corollary}
    For every $m\geq4$, we have that $\mathcal{M}_{\ell,m}^\mathrm{top}$, $ \mathcal{R}_{2,m-1}^{\mathrm{top}}$ and $\mathcal{R}_{3,m}^{\mathrm{top}}$ are not semi-regular sequences.
\end{corollary}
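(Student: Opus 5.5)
The plan is to reduce the corollary to Proposition~\ref{prop:Krull-dim} and the standard characterization of semi-regularity for overdetermined versus square systems. A homogeneous sequence $g_1,\dots,g_s$ in $S=K[x_1,\dots,x_n]$ is semi-regular if, for every $i$ and every degree $d$ below the degree of regularity, multiplication by $g_i$ on $(S/(g_1,\dots,g_{i-1}))_{d-\deg g_i}$ is injective. Equivalently, the Hilbert series of $S/(g_1,\dots,g_s)$ equals
\[
\left[\frac{\prod_i (1-z^{\deg g_i})}{(1-z)^n}\right]_+ ,
\]
the truncation at the first non-positive coefficient. In our situation there are $n=m-1$ variables and $s=m \ge n$ homogeneous generators. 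The key observation is that when $s\geq n$ this truncated series is always a polynomial. Indeed, $\prod_{i=1}^{n}(1-z^{d_i})/(1-z)^n$ is already a polynomial with positive coefficients, and multiplying it by the further factors $(1-z^{d_i})$ produces, after truncation, a polynomial again. Hence a semi-regular sequence with $s\ge n$ forces $S/(g_1,\dots,g_s)$ to have finite length, that is, the ideal is zero-dimensional and $d_{\text{reg}}<\infty$.

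With this in hand, the argument runs as follows. First, recall from the proof of Proposition~\ref{prop:Krull-dim} the explicit top parts. For instance, $\mathcal{M}_{\ell,m}^{\mathrm{top}}=\{j_1^{\ell+1},\,j_1^{\ell}j_2^{\ell},\dots,j_{m-1}^{\ell+1}\}$, and analogously for $\mathcal{R}_{2,m-1}^{\mathrm{top}}$ and $\mathcal{R}_{3,m}^{\mathrm{top}}$. Each of these is a set of $m$ homogeneous polynomials in $m-1$ variables. Second, invoke Proposition~\ref{prop:Krull-dim}: for $m\ge4$ the ideals have Krull dimension $\lfloor (m-2)/2\rfloor\geq1$, so they are not zero-dimensional. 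Third, conclude by contradiction with the observation above: if the sequence were semi-regular, its Hilbert series would be a polynomial, giving Krull dimension $0$.

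The main obstacle is making the Hilbert-series claim precise in the convention of \cite{BFS04,Pardue}. Specifically, one must show that the truncation $[\cdot]_+$ of the generating series is a polynomial whenever $s\ge n$, so that semi-regularity implies zero-dimensionality. I would handle this by citing \cite{Pardue}, where it is shown that for $s\geq n$ the series has a non-positive coefficient. Alternatively, I would give a direct argument: the first $n$ factors yield a finite polynomial of degree $\sum(d_i-1)$, and multiplying by one further factor $(1-z^{d})$ creates a negative coefficient in degree at most $\sum(d_i-1)+d$. Either route forces finite length of the quotient.

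A more hands-on alternative avoids Hilbert series entirely by exhibiting an explicit failure of injectivity below the would-be degree of regularity. In the degree $2$ Renes case, for example, $\alpha_0^{2}$ and $\alpha_1^{2}\cdot\alpha_0$-type syzygies give a non-Koszul relation between consecutive monomial generators sharing a variable. I would keep this only as a backup, since the dimension argument is cleaner.
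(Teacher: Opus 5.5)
Your argument is correct and is essentially the paper's. The paper states the corollary as a direct consequence of Proposition~\ref{prop:Krull-dim} together with \cite{BFS04,BDDMMT21,Pardue}: a semi-regular sequence of $m$ forms in $m-1$ variables has the polynomial Hilbert series $\bigl[\prod_i(1-z^{d_i})/(1-z)^{m-1}\bigr]_+$, so it generates a zero-dimensional ideal, and the top-part ideals are not zero-dimensional for $m\ge4$. The step you flag as the main obstacle is immediate, since for $s\ge n$ the untruncated quotient is already a polynomial and truncating a polynomial gives a polynomial.
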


Another invariant which can be used as a proxy for the solving degree is the Castelnuovo--Mumford regularity. Indeed, when the ideal is in generic coordinates in the sense of Bayer and Stillman \cite{BS87}, then the regularity of the ideal generated by the homogenized polynomials of the system is greater or equal than the solving degree of the system by \cite{Caminata-Gorla2}.
On the other hand, in Proposition~\ref{prop:genericcoordinates}, we prove that the systems $ \mathcal{M}_{\ell,m}$, $\mathcal{R}_{2,m-1}$ and $\mathcal{R}_{3,m}$ are not in generic coordinates.

For the convenience of the reader, we recall here the relevant definitions and notations.
Let $K$ be a field and let $S=K[x_1,\dots,x_n]$ be a polynomial ring. For a polynomial $f\in S$, we denote by $f^h\in R=S[t]$ its homogenization with respect to an extra variable $t$, that is
$$f^h = t^{\deg(f)} \cdot f\left(\frac{x_1}{t},\ldots,\frac{x_n}{t}\right).$$
For a list of polynomials $\mathcal{F} = [f_1,\ldots,f_s]$ in $S$, we denote by 
$\mathcal{F}^h = [f_1^h,\ldots,f_s^h]$ 
 the list of the homogenized polynomials in $R$.
If $I \subseteq S$ is an ideal, we denote by $I^h$ the ideal generated by all the homogenizations of elements in $I$, that is $I^h = (f^h \ | \ f \in I)$. Clearly, we have $(\mathcal{F}^h)\subseteq (\mathcal{F})^h$, but the inclusion may be strict.

Given a homogeneous ideal $J$ of $R$, the \emph{saturation} of $J$ with respect to the irrelevant maximal ideal $\mathfrak{m}=(x_1,\ldots,x_n,t)$ of $R$ is
$$J^{\sat} = \bigcup\limits_{d \geq 0} \{f \in R \ | \ fm \in J \  \forall m \in R_d\}=\bigcup\limits_{d \geq 0}(J:\mathfrak{m}^d).$$
We stress that the condition  $fm \in J \  \forall m \in R_d$  can be checked only on the monomials of $R_d$, since they form a $K$-basis for $R_d$ and $J$ is an ideal. Moreover, we always have that $J \subseteq J^{\sat}$.
Now, let $\overline{J} = J \overline{K}[x_1,\ldots,x_n,t]$ be the extension of the ideal $J$ over the algebraic closure $\overline{K}$ of $K$. We say that $J$ is in \emph{generic coordinates} over $\overline{K}$ if $|\mathcal{Z}_+(\overline{J})| < \infty$ and either $|\mathcal{Z}_+(\overline{J})| = 0$ or $t \nmid 0$ modulo $\overline{J}^{\sat}$.

\begin{lemma}\label{lem:gen-coord}
Let $K$ be an algebraically closed field and let $S = K[x_1,\ldots,x_n]$ be a polynomial ring. Let $\mathcal{F} = [f_1,\ldots,f_m]$ be a list of non-homogeneous polynomials such that $|\mathcal{Z}(\mathcal{F})| = 1$. 
Assume that there exists $i \in \{1,\ldots,n\}$ such that for all integers $a\geq1$  we have $x_i^a \not\in \Supp(\mathcal{F}^h)  = \bigcup\limits_{c = 1}^m \Supp(f_c^h)$. Then, the ideal $(\mathcal{F}^h)$ is not in generic coordinates over $K$.
\end{lemma}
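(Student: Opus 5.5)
Since the affine variety $\mathcal{Z}(\mathcal{F})$ consists of one point, I would look for a projective zero of $(\mathcal{F}^h)$ at infinity. The hypothesis says that no polynomial of $\mathcal{F}^h$ contains a pure power $x_i^a$ with $a\geq1$ in its support. Consider the point $Q=[e_i]\in\mathbb{P}^n$, i.e.\ $x_i=1$ and $x_k=0$ for $k\neq i$, $t=0$. Evaluating a homogeneous $f_c^h$ of degree $d_c$ at $Q$ returns the coefficient of $x_i^{d_c}$ in $f_c^h$. If $d_c\geq1$ this coefficient is zero by hypothesis. If $d_c=0$, then $f_c$ is a nonzero constant, so $\mathcal{Z}(\mathcal{F})=\emptyset$, contradicting $|\mathcal{Z}(\mathcal{F})|=1$. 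Hence $Q\in\mathcal{Z}_+(\mathcal{F}^h)$ and lies on the hyperplane $t=0$. The affine solution $P=(p_1,\dots,p_n)$ also gives the projective zero $[p_1:\dots:p_n:1]$, which differs from $Q$. So $\mathcal{Z}_+(\mathcal{F}^h)$ is nonempty and contains a point with $t=0$.

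Now I would check the generic-coordinates definition. If $|\mathcal{Z}_+|=\infty$, the ideal is not in generic coordinates and we are done. Otherwise $|\mathcal{Z}_+|$ is finite and nonzero, so I must show that $t$ is a zero divisor modulo $J^{\sat}$, where $J=(\mathcal{F}^h)$. Here I use the standard fact that, for a saturated ideal with finite zero set, the associated primes of $R/J^{\sat}$ are exactly the homogeneous prime ideals $\mathfrak{p}_Q$ of the points of $\mathcal{Z}_+(J)$. Indeed $J^{\sat}$ has no $\mathfrak{m}$-primary component, and its minimal primes are the ideals of the points. Since $t\in\mathfrak{p}_Q$ and $\mathfrak{p}_Q\in\mathrm{Ass}(R/J^{\sat})$, the element $t$ lies in an associated prime and is therefore a zero divisor on $R/J^{\sat}$.

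To make this concrete I would give an explicit witness. Take a primary decomposition $J^{\sat}=\mathfrak{q}_Q\cap\mathfrak{q}'$, where $\mathfrak{q}_Q$ is $\mathfrak{p}_Q$-primary and $\mathfrak{q}'$ collects the components at the other points. Choose $g\in\mathfrak{q}'\cdot(\mathfrak{q}_Q:\mathfrak{p}_Q)\setminus\mathfrak{q}_Q$, suitably homogeneous. Then $tg\in\mathfrak{q}_Q\cap\mathfrak{q}'=J^{\sat}$ while $g\notin J^{\sat}$. The main obstacle is justifying that $\mathfrak{p}_Q$ really is associated, which requires that $\mathfrak{q}'$ is not contained in $\mathfrak{p}_Q$. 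This holds by prime avoidance, since the points are distinct. I would cite the standard correspondence between points of a zero-dimensional projective scheme and the associated primes of the saturated ideal, as in Bayer--Stillman or \cite{Caminata-Gorla2}.

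Finally, the hypothesis $|\mathcal{Z}(\mathcal{F})|=1$ is only needed to exclude constants and to ensure nonemptiness. The key point is just that $Q$ lies at infinity, where $t=0$.
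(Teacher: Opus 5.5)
Your proof is correct, but it follows a different route from the paper's.

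\textbf{The paper's route.} The paper argues algebraically. It invokes the Shape Lemma to write $(\mathcal{F})^h=(x_1-a_1t,\ldots,x_n-a_nt)$. From $(\mathcal{F})^h=\bigcup_d((\mathcal{F}^h):t^d)$ it gets $t^{\tilde d}(x_i-a_it)\in(\mathcal{F}^h)$ for some $\tilde d\geq1$. It then shows $x_i-a_it\notin(\mathcal{F}^h)^{\sat}$: the product $(x_i-a_it)x_i^{d'}$ contains the pure power $x_i^{d'+1}$, and no element of $(\mathcal{F}^h)$ can. This gives an explicit zero divisor.

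\textbf{Your route.} You read the support hypothesis geometrically. It says exactly that $(\mathcal{F}^h)\subseteq\mathfrak{p}_Q=(x_k : k\neq i)+(t)$, i.e.\ that $Q=[e_i]$ is a projective zero on $t=0$. This is the same fact underlying the paper's support computation. You then conclude from the standard description $\mathrm{Ass}(R/J^{\sat})=\{\mathfrak{p}_P : P\in\mathcal{Z}_+(J)\}$ for finite $\mathcal{Z}_+(J)$, together with $t\in\mathfrak{p}_Q$.

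\textbf{What each buys.} Your argument is more robust. It uses $|\mathcal{Z}(\mathcal{F})|=1$ only to exclude constant $f_c$. The paper's appeal to the Shape Lemma tacitly needs $(\mathcal{F})$ to be the maximal ideal of the point, i.e.\ radical, which $|\mathcal{Z}(\mathcal{F})|=1$ alone does not guarantee. For example, $(x_1^2,x_2)$ has a single zero. The paper's argument can be repaired by using $(x_i-a_it)^N$ with $(x_i-a_i)^N\in(\mathcal{F})$. Conversely, the paper's proof is more elementary: it needs only a support computation, not primary decomposition and the projective Nullstellensatz.

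\textbf{Minor points.} The fact $\mathfrak{q}'\not\subseteq\mathfrak{p}_Q$ follows because a prime containing a finite intersection (or product) of ideals contains one of them; this is not prime avoidance. Also, your explicit witness $g$ is superfluous once $\mathfrak{p}_Q\in\mathrm{Ass}(R/J^{\sat})$ is established.
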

\begin{proof}
Let $\mathcal{Z}(\mathcal{F})=\{(a_1,\ldots,a_n)\}\subseteq
K^n$. Thanks to the Shape Lemma (see e.g. \cite{Caminata-Gorla2}), we get that $\LexGB((\mathcal{F})) = [x_1 - a_1,\ldots,x_n-a_n]$. Since $(\mathcal{F})$ is zero-dimensional, $[x_1 - a_1,\ldots,x_n-a_n]$ is also the reduced Gr\"obner basis with respect to $\DegRevLex$. Thanks to \cite[Proposition 4.3.21]{Robbiano2}, the homogenized ideal $(\mathcal{F})^h$ is generated by the homogenization of this basis, that is
$$(\mathcal{F})^h = (x_1 - a_1t,\ldots,x_n - a_nt) \subseteq S[t] = R.$$
From now on, we equip $R$  with the $\DegRevLex$ term order with $x_1 > \cdots > x_n > t$. Moreover, we have that $(\mathcal{F}^h) \subseteq (\mathcal{F}^h)^{\mathrm{sat}} \subseteq (\mathcal{F})^h$, where the former follows from the definition of saturation and the latter from the fact that
$$(\mathcal{F}^h)^{\mathrm{sat}} = \bigcup\limits_{d \geq 0}((\mathcal{F}^h):\mathfrak{m}^d), \ \ \ \ \ \ \ \  (\mathcal{F})^h = \bigcup\limits_{d \geq 0}((\mathcal{F}^h) : (t)^d), \ \ \ \ \ \ \ \ \  \mbox{and}\ (t) \subseteq \mathfrak{m}.$$
Now, we claim that $t \mid 0$ modulo $(\mathcal{F}^h)^{\mathrm{sat}}$. By assumption, we can fix $i \in \{1,\ldots,n\}$ such that $x_i^a \not\in \Supp(\mathcal{F}^h) \ \forall a \in \mathbb{Z}_{\geq 1}$ and we know that $x_i - a_it \in (\mathcal{F})^h$. So, again since $(\mathcal{F})^h = \bigcup\limits_{d \geq 0}((\mathcal{F}^h):(t)^d)$,  there exists $\Tilde{d} \in \mathbb{Z}_{\geq 1}$ such that
$$(x_i - a_it)\cdot t^{\Tilde{d}} \in (\mathcal{F}^h) \subseteq (\mathcal{F}^h)^{\mathrm{sat}}.$$
Note that $\Tilde{d}$ cannot be $0$. In fact, since by assumption $x_i^a \not\in \Supp(\mathcal{F}^h)$ for all $a \in \mathbb{Z}_{\geq 1}$, then $x_i \in \Supp(x_i - a_it)$ cannot be written as a polynomial combination of the generators $\mathcal{F}^h$.

If we prove that $x_i - a_it \not\in (\mathcal{F}^h)^{\mathrm{sat}}$, the claim follows. Suppose by contradiction that $x_i - a_it \in (\mathcal{F}^h)^{\mathrm{sat}}$. Then, by definition of saturation, there exists $d' \in \mathbb{Z}_{\geq0}$ such that $x_i - a_it \in [(\mathcal{F}^h) : \mathfrak{m}^{d'}]$. Now, $x_i^{d'} \in \mathfrak{m}^{d'},$ so
$$f = (x_i - a_it) \cdot x_i^{d'} = x_i^{d' + 1} - a_i x_i^{d'}t \in (\mathcal{F}^h).$$
But, since by assumption  $x_i^a \not\in \Supp(\mathcal{F}^h)$ for all $a \in \mathbb{Z}_{\geq_1}$, again $x_i^{d' + 1} \in \Supp(f)$ cannot be written as a polynomial combination of the generators $\mathcal{F}^h$. Therefore $f \not\in (\mathcal{F}^h)$ and we get a contradiction, which proves the claim.

Finally, since $(a_1,\ldots,a_n) \in \mathcal{Z}(\mathcal{F}),$ we get that $[a_1 : \cdots : a_n : 1] \in \mathcal{Z}_+(\mathcal{F}^h),$ so $|\mathcal{Z}_+(\mathcal{F}^h)| > 0$. This concludes the proof.
\end{proof}

\begin{proposition}\label{prop:genericcoordinates}
Let $m \geq 4$ and let $\mathcal{F} = \mathcal{M}_{\ell,m}$, $\mathcal{R}_{2,m-1}$, or $\mathcal{R}_{3,m}$. If $|\mathcal{Z}(\mathcal{F})| = 1$, then $(\mathcal{F}^h)$ is not in generic coordinates over $\overline{\mathbb{F}}_p$.
\end{proposition}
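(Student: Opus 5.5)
Since every variable of each system occurs as a pure power in some homogenized generator (for instance $j_{i}^{\ell+1}$ in $\Phi_\ell(j_{i-1},j_i)^h$, $\alpha_i^2$ in $\Psi_2(\alpha_{i-1},\alpha_i)^h$, and $a_i^3$ in $\Gamma_1(a_{i-1},a_i)^h$), the hypothesis of Lemma~\ref{lem:gen-coord} is not literally available. I therefore plan to argue directly from the definition of generic coordinates, using the points at infinity supplied by Proposition~\ref{prop:Krull-dim}.

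\emph{Step 1: points at infinity.} Work in $R=\overline{\mathbb{F}}_p[x_1,\dots,x_n,t]$, where the $x_i$ are the $m-1$ variables of the system. Setting $t=0$ in the generators gives
\[
(\mathcal{F}^h)+(t) = (\mathcal{F}^{\mathrm{top}})+(t),
\]
since $f^h|_{t=0}=f^{\mathrm{top}}$. Hence
\[
\mathcal{Z}_+(\mathcal{F}^h)\cap\{t=0\}=\mathcal{Z}_+(\mathcal{F}^{\mathrm{top}})\subseteq\mathbb{P}^{n-1}.
\]
By Proposition~\ref{prop:Krull-dim}, for $m\ge 4$ the affine cone defined by $(\mathcal{F}^{\mathrm{top}})$ has Krull dimension at least $1$. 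Therefore $\mathcal{Z}_+(\mathcal{F}^{\mathrm{top}})$ is a nonempty projective variety. Concretely, from the radicals computed there, one point is $[0:1:0:\dots:0]$, obtained by setting every variable to zero except the second one. So $(\mathcal{F}^h)$ has at least one projective zero with $t=0$.

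\emph{Step 2: case distinction.} If $|\mathcal{Z}_+(\overline{(\mathcal{F}^h)})|=\infty$, which happens for instance when $\dim(\mathcal{F}^{\mathrm{top}})\ge 2$, the ideal is not in generic coordinates by definition. Otherwise $\mathcal{Z}_+$ is finite and nonempty, since it contains the affine solution $[a_1:\dots:a_n:1]$ as at the end of the proof of Lemma~\ref{lem:gen-coord}, as well as the point $P$ at infinity from Step~1. We must then show that $t$ is a zero divisor modulo $J^{\sat}$, where $J=\overline{(\mathcal{F}^h)}$.

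\emph{Step 3: $t$ is a zero divisor.} The associated primes of $R/J^{\sat}$ are exactly the associated primes of $J$ different from $\mathfrak m$. The minimal primes over $J$ correspond to the irreducible components of $\mathcal{Z}_+(J)$; since that set is finite, these are the homogeneous primes $\mathfrak p_Q$ of the points $Q$, each of dimension $1$ and distinct from $\mathfrak m$. The prime $\mathfrak p_P$ of the point at infinity contains $t$ and is associated to $R/J^{\sat}$, because saturation does not remove relevant minimal primes. Since every element of an associated prime is a zero divisor, $t$ is a zero divisor modulo $J^{\sat}$. Hence $(\mathcal{F}^h)$ is not in generic coordinates. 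The hypothesis $|\mathcal{Z}(\mathcal{F})|=1$ is only used to ensure nonemptiness and finiteness on the affine side, and is harmless.

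The main obstacle is Step~1: checking carefully that the points at infinity of $\mathcal{Z}_+(\mathcal{F}^h)$ are exactly the zeros of $\mathcal{F}^{\mathrm{top}}$, and that the positive Krull dimension from Proposition~\ref{prop:Krull-dim} yields a genuine projective point. Both facts are, however, standard.
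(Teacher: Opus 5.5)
Your proof is correct, but its opening claim is false. Homogenization multiplies lower-degree terms by powers of $t$. Since $\deg\Phi_\ell=2\ell$, the term $j_2^{\ell+1}$ of $\Phi_\ell(j_1,j_2)$ becomes $j_2^{\ell+1}t^{\ell-1}$ in $\Phi_\ell(j_1,j_2)^h$. Likewise one gets $\alpha_1^2t$ in $\Psi_2(\alpha_0,\alpha_1)^h$ and $a_2^3t^2$ in $\Gamma_1(a_1,a_2)^h$. Only the end variables, which occur in the univariate first and last equations, appear as pure powers. For $m\ge4$ the second variable ($j_2$, $\alpha_1$, $a_2$) is not an end variable, so the hypothesis of Lemma~\ref{lem:gen-coord} does hold. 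That is exactly how the paper proves the proposition.

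The two arguments are closer than they look. For homogeneous generators of positive degree, the condition $x_i^a\notin\Supp(\mathcal{F}^h)$ for all $a\ge1$ says precisely that the coordinate point with $x_i=1$ and all other coordinates (including $t$) zero lies in $\mathcal{Z}_+(\mathcal{F}^h)$. This is your point $[0:1:0:\dots:0]$ at infinity. The difference is how this point is turned into a failure of generic coordinates.

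\begin{itemize}
\item The paper uses the unique affine zero to write $(\mathcal{F})^h=(x_1-a_1t,\dots,x_n-a_nt)$. It then exhibits the explicit element $x_i-a_it$, which lies in $(\mathcal{F}^h):t^{\tilde d}$ but not in $(\mathcal{F}^h)^{\sat}$. This shows directly that $t$ is a zero divisor modulo the saturation, with no case distinction.
\item You argue structurally. If $\mathcal{Z}_+$ is infinite you are done by definition. If it is finite, the point at infinity is an isolated component whose prime contains $t$ and survives saturation, so it is associated to $R/J^{\sat}$.
\end{itemize}

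Your version never uses $|\mathcal{Z}(\mathcal{F})|=1$, since the point at infinity already makes $\mathcal{Z}_+$ nonempty. It also does not need $(\mathcal{F})$ to be radical, which the paper's appeal to the Shape Lemma to get $\LexGB((\mathcal{F}))=[x_1-a_1,\dots,x_n-a_n]$ tacitly requires. So your argument proves the conclusion for every instance with $m\ge4$. The paper's argument is more elementary and produces an explicit witness.
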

\begin{proof}
Let $\mathcal{F} = \mathcal{M}_{\ell,m}$. Since $m \geq 4$, we have that $j_2^b \not\in \Supp(\mathcal{M}_{\ell,m}^h)$ for all $b \in \mathbb{Z}_{\geq 1}$. So, we conclude using Lemma \ref{lem:gen-coord}. Similarly, for $\mathcal{R}_{2,m-1}$, it holds $\alpha_1^b \not\in \Supp(\mathcal{M}_{2,m-1}^{h})$ for all $b \in \mathbb{Z}_{\geq 1}$ and for $\mathcal{R}_{3,m}$ it holds $a_2^b \not\in \Supp(\mathcal{M}_{3,m}^{h})$ for all $b \in \mathbb{Z}_{\geq 1}$, so in both cases we conclude using again Lemma \ref{lem:gen-coord}.
\end{proof}

\begin{remark}
If $m \geq 6$ for $\mathcal{F} = \mathcal{M}_{\ell,m}$, $\mathcal{R}_{2,m-1}$, or $\mathcal{R}_{3,m}$, it is easier to check that $(\mathcal{F}^h)$ is not in generic coordinates over $\overline{\mathbb{F}}_p$. In fact, one can show that $|\mathcal{Z}_+(\overline{\mathcal{F}^h})| = \infty$. Indeed, $[0 : v_1 : 0 : \cdots : 0 : v_2 : 0] \in \mathcal{Z}_+(\overline{\mathcal{F}^h})$ for all $v_1$,$v_2\in\overline{\mathbb{F}}_p$ linearly independent. 
\end{remark}

\section{Experimental Results}\label{section:experiments}

We solved the polynomial systems arising from the previous modelings using Gr\"obner bases in order to compare the modeling based on Renes polynomials with that based on modular polynomials. All computations were performed using Magma V2.28-13 \cite{Magma}, on a Dell Inc. Precision 7875 Tower, with 64 GB of RAM and processor AMD \textregistered  Ryzen threadripper pro 7975wx 32-cores x 64.

We constructed the test examples as follows. Let $d=2,3$ be the degree. We selected a random supersingular elliptic curve $E_{\start}$ in Montgomery form (for degree $d=2$) or triangular form (for degree $d=3$) and we constructed a non-backtracking path of length $m$ of isogenies obtaining the curve $E_{\finish}$. Then, we built the system $\mathcal{M}_{d,m}$ of modular polynomials (Polynomial system~\hyperlink{system1}{1}) and $\mathcal{R}_{d,m}$ of Renes polynomials (Polynomial System~\hyperlink{system3}{3} and Polynomial System~\hyperlink{system7}{5}) to recover the path from $E_{\start}$ to $E_{\finish}$. These systems were solved using Magma's implementation of F4 \cite{FaugereF4}. We recorded the CPU time and the maximal degree reached during the Gr\"obner basis computation (used as a proxy for the solving degree).

We repeated these computations for nine different values of the underlying prime $p$, ranging from $10$ to $32$ bits. For each set of parameters, we performed 10 tests. In Table~\ref{table:deg2} and Table~\ref{table:deg3}, we report the results for degrees $2$ and $3$, respectively. The results show that the solving degree is consistently smaller for Renes polynomials than for modular polynomials. Moreover, the average solving times are two to three orders of magnitude lower for Renes polynomials. This allows us to solve paths of isogenies of length up to $16$ in degree $2$ for small primes, whereas we are limited to length $12$ when using modular polynomials. A similar trend is observed in degree $3$.

\begin{remark}
To solve longer paths of degree 3 isogenies described by the modular polynomial systems $\mathcal{M}_{3,m}$, Takahashi et al. \cite{Taketal} split the system into two halves and compute a Gröbner basis for each of them. Then, by eliminating variables, they find the solutions. This technique can also be applied to the polynomial systems $\mathcal{R}_{2,m}$ and $\mathcal{R}_{3,m}$ with Renes polynomials, allowing one to recover longer isogeny paths. However, to enable a clearer comparison of the systems, we chose not to apply this splitting technique to any of the computations performed in this paper.
\end{remark}

\newpage
\small
\begin{longtable}{|c|c|c|c|c|c|c|c|c|c|}
\toprule
$m$ & $\mathrm{tm}(\mathcal{M}_{2,m})$ & $\mathrm{tm}(\mathcal{R}_{2,m})$  & $\SolvingDegree(\mathcal{M}_{2,m})$& $\SolvingDegree(\mathcal{R}_{2,m})$ &
$m$ & $\mathrm{tm}(\mathcal{M}_{2,m})$ & $\mathrm{tm}(\mathcal{R}_{2,m})$  & $\SolvingDegree(\mathcal{M}_{2,m})$& $\SolvingDegree(\mathcal{R}_{2,m})$ \\ \hline
\multicolumn{5}{|c|}{$p = 587417, 20 \mbox{ bits}$} & \multicolumn{5}{|c|}{$p = 919447, 20 \mbox{ bits}$} \\
\midrule
\endfirsthead

6 & 0.02 & 0.003 & 6 & 4 & 6 & 0.02 & 0.002 & 6 & 4 \\
7 & 0.514 & 0.018 & 7 & 5 & 7 & 0.671 & 0.02 & 7 & 5 \\
8 & 1.722 & 0.09 & 7 & 5 & 8 & 2.612 & 0.102 & 7 & 5 \\
9 & 66.563 & 0.298 & 8 & 5 & 9 & 135.981 & 0.375 & 8 & 5 \\
10 & 78.454 & 2.008 & 8 & 5 & 10 & 179.822 & 3.009 & 8 & 5 \\
11 & 3040.88 & 15.958 & 9 & 5 & 11 & 10055.526 & 27.815 & 9 & 5 \\
12 & 3359.307 & 50.588 & 9 & 5 & 12 & 11223.916 & 96.897 & 9 & 5 \\
13 & $>7$ hrs & 104.499 &  & 6 & 13 & $>7$ hrs & 213.71 &  & 6 \\
14 &  & 434.435 &  & 6 & 14 &  & 1016.976 &  & 6 \\
15 &  & 7632.326 &  & 6 & 15 &  & $>7$ hrs &  &  \\
16 &  & 23416.61 &  & 6 & 16& & & & \\
17 &  & $>7$ hrs &  &   & 17 & & & & \\
\midrule

\multicolumn{5}{|c|}{$p = 15739441, 24 \mbox{ bits}$} & \multicolumn{5}{|c|}{$p = 16541149, 24 \mbox{ bits}$} \\
\midrule

6 & 0.02 & 0.001 & 6 & 4 & 6 & 0.02 & 0.001 & 6 & 4 \\
7 & 0.502 & 0.018 & 7 & 5 & 7 & 0.506 & 0.019 & 7 & 5 \\
8 & 1.738 & 0.09 & 7 & 5 & 8 & 1.716 & 0.08 & 7 & 5 \\
9 & 67.766 & 0.309 & 8 & 5 & 9 & 67.375 & 0.276 & 8 & 5 \\
10 & 76.685 & 2.019 & 8 & 5 & 10 & 77.189 & 1.951 & 8 & 5 \\
11 & 2710.243 & 15.68 & 9 & 5 & 11 & 2692.042 & 15.635 & 9 & 5 \\
12 & 2618.765 & 50.162 & 9 & 5 & 12 & 2610.953 & 50.164 & 9 & 5 \\
13 & $>7$ hrs & 104.273 &  & 6 & 13 & $>7$ hrs & 103.944 &  & 6 \\
14 &  & 437.867 &  & 6 & 14 &  & 438.755 &  & 6 \\
15 &  & 7582.87 &  & 6 & 15 &  & 7703.542 &  & 6 \\
16 &  & 23514.02 &  & 6 & 16 &  & $>7$ hrs &  &  \\
17 & & $>7$ hrs &   &   & 17 &  &          &  & \\

\midrule

\multicolumn{5}{|c|}{$p = 266545607, 28 \mbox{ bits}$} & \multicolumn{5}{|c|}{$p = 2757059413, 32 \mbox{ bits}$} \\
\midrule

6 & 0.022 & 0.003 & 6 & 4 & 6 & 0.086 & 0.01 & 6 & 4 \\
7 & 0.704 & 0.02 & 7 & 5 & 7 & 2.333 & 0.06 & 7 & 5 \\
8 & 2.674 & 0.102 & 7 & 5 & 8 & 8.678 & 0.387 & 7 & 5 \\
9 & 129.268 & 0.371 & 8 & 5 & 9 & 390.104 & 1.361 & 8 & 5 \\
10 & 168.687 & 2.923 & 8 & 5 & 10 & 476.839 & 10.147 & 8 & 5 \\
11 & 9739.063 & 27.347 & 9 & 5 & 11 & 21743.386 & 86.571 & 9 & 5 \\
12 & 11001.887 & 95.651 & 9 & 5 & 12 & 22894.691 & 284.438 & 9 & 5 \\
13 & $>7$ hrs & 207.348 &  & 6 & 13 & $>7$ hrs & 612.507 &  & 6 \\
14 &  & 1000.487 &  & 6 & 14 &  & 2762.891 &  & 6 \\
15 &  & $>7$ hrs &  &  & 15 &  & $>7$ hrs &  &  \\
\hline

\caption{{\small Timings and solving degrees for polynomial systems modeling a SIP of degree $2^m$. The values $\mathrm{tm}(\mathcal{M}_{2,m})$ and $\mathrm{tm}( \mathcal{R}_{2,m})$ record the average time (in seconds) of solving the polynomial systems $\mathcal{M}_{2,m}$ (modular polynomials) and $\mathcal{R}_{2,m}$ (Renes polynomials) modeling a SIP of degree $2^m$; $\SolvingDegree(\mathcal{M}_{2,m})$ and $\SolvingDegree(\mathcal{R}_{2,m})$  are the highest step degrees obtained during the computation by using Magma F4 implementation.}}\label{table:deg2}
\end{longtable}
\normalsize

\newpage
\small
\begin{longtable}{|c|c|c|c|c|c|c|c|c|c|}
\toprule
$m$ & $\mathrm{tm}(\mathcal{M}_{3,m})$ & $\mathrm{tm}(\mathcal{R}_{3,m})$  & $\SolvingDegree(\mathcal{M}_{3,m})$& $\SolvingDegree(\mathcal{R}_{3,m})$ &
$m$ & $\mathrm{tm}(\mathcal{M}_{3,m})$ & $\mathrm{tm}(\mathcal{R}_{3,m})$  & $\SolvingDegree(\mathcal{M}_{3,m})$& $\SolvingDegree(\mathcal{R}_{3,m})$ \\ \hline

\multicolumn{5}{|c|}{$p = 919447, 20 \mbox{ bits}$} & 
\multicolumn{5}{|c|}{$p = 15739441, 24 \mbox{ bits}$} \\
\midrule
6 & 0.901 & 0.074 & 8 & 6 & 6 & 0.648 & 0.06 & 8 & 6 \\
7 & 76.149 & 1.345 & 10 & 7 & 7 & 43.406 & 0.916 & 10 & 7 \\
8 & 1887.466 & 30.398 & 10 & 7 & 8 & 724.222 & 17.511 & 10 & 7 \\
9 & $>7$ hrs & 42.104 &  & 8 & 9 & 11580.257 & 203.132 & 11 & 8 \\
10 &  & 2620.534 &  & 8 & 10 & $m>9$ & 1000.555 &  & 8 \\
11 &  & $>7$ hrs &  &  & 11 &  & 17300.334 &  & 9 \\
12 &  &  &  &  & 12 &  & $>7$ hrs &  &  \\

\midrule
\multicolumn{5}{|c|}{$p = 16541149, 24 \mbox{ bits}$} & 
\multicolumn{5}{|c|}{$p = 12239911, 24 \mbox{ bits}$} \\
\midrule
6 & 0.652 & 0.06 & 8 & 6 & 6 & 0.906 & 0.074 & 8 & 6 \\
7 & 43.066 & 0.912 & 10 & 7 & 7 & 74.625 & 1.304 & 10 & 7 \\
8 & 726.928 & 17.443 & 10 & 7 & 8 & 1721.217 & 29.82 & 10 & 7 \\
9 & 11609.202 & 202.294 & 11 & 8 & 9 & $>7$ hrs & 411.02 &  & 8 \\
10 & $>7$ hrs & 1001.411 &  & 8 & 10 &  & 2531.694 &  & 8 \\
11 &  & 17303.164 &  & 9 & 11 &  & $>7$ hrs &  &  \\
12 &  & $>7$ hrs &  &  & 12 &  &  &  &  \\

\midrule
\multicolumn{5}{|c|}{$p = 266545607, 28 \mbox{ bits}$} & 
\multicolumn{5}{|c|}{$p = 2757059413, 32 \mbox{ bits}$} \\
\midrule
6 & 0.897 & 0.073 & 8 & 6 & 6 & 2.968 & 0.283 & 8 & 6 \\
7 & 75.404 & 1.302 & 10 & 7 & 7 & 241.397 & 4.486 & 10 & 7 \\
8 & 1766.141 & 30.031 & 10 & 7 & 8 & 4705.333 & 95.73 & 10 & 7 \\
9 & $>7$ hrs & 419.066 &  & 8 & 9 & $>7$ hrs & 1225.204 &  & 8 \\
10 &  & 2579.288 &  & 8 & 10 &  & 6568.001 &  & 8 \\
11 &  & $>7$ hrs &  &  & 11 &  & $>7$ hrs &  &  \\

\bottomrule

\caption{{\small Timings and solving degrees for polynomial systems modeling a SIP of degree $3^m$. The values $\mathrm{tm}(\mathcal{M}_{3,m})$ and $\mathrm{tm}(\mathcal{R}_{3,m})$ record the average time (in seconds) of solving the polynomial systems $\mathcal{M}_{3,m}$ (modular polynomials) and $\mathcal{R}_{3,m}$ (Renes polynomials) modeling a SIP of degree $3^m$. $\SolvingDegree(\mathcal{M}_{3,m})$ and $\SolvingDegree(\mathcal{R}_{3,m})$ are the highest step degrees obtained during the computation by using Magma F4 implementation.}}\label{table:deg3}
\end{longtable}
\normalsize

\bibliographystyle{siam}
\bibliography{biblio.bib}

\end{document}